\newtheoremstyle{hypstyle}%
 {\topsep}%
 {\topsep}%
 {\itshape}%
 {}%
 {\bfseries}%
 {:}%
 {.5em}%
{}
\newcommand{\abs}[1]{\left| #1 \right|}
\newcommand{\set}[1]{\left\{ #1 \right\}}
\newcommand{\ignore}[1]{}
\newcommand{\myparagraphF}[1]{\noindent \underline{\bf #1:}}
\newcommand{\myparagraph}[1]{\vspace{0.5em} \noindent \underline{\bf #1:}}
\newcommand{\argmax}{\operatornamewithlimits{argmax}}
\theoremstyle{hypstyle}
\newtheorem{theorem}{Theorem}
\newtheorem{lemma}[theorem]{Lemma}
\newtheorem{corollary}[theorem]{Corollary}
\newtheorem{problem}{Problem}
\newcommand{\neighbor}[2]{N_{#2}(#1)}
\newcommand{\sketch}[2]{\widetilde{N}_{#2}(#1)}
\newcommand{\rank}[1]{r_{#1}}
\newcommand{\kth}{k\textsf{-th}}
\newcommand{\secit}{}
\begin{document}

\IEEEoverridecommandlockouts
\title{Fractality of Massive Graphs: Scalable Analysis \\ with Sketch-Based Box-Covering Algorithm$^\ast$
  \thanks{$^\ast~$This work is done while all authors were at National Institute of Informatics.
A shorter version of this paper appeared in the proceedings of ICDM 2016~\cite{short}.}}

\author{
{Takuya Akiba{\small$^{\dagger}$},
 Kenko Nakamura{\small $^{\ddag}$},
 Taro Takaguchi{\small $^{\S}$}
} \\

\fontsize{10}{10}\selectfont\itshape
$^{\dagger}$\,Preferred Networks, Inc. \;
$^{\ddag}$\,Recruit Communications Co., Ltd. \\
$^{\S}$\,National Institute of Information and Communications Technology \\

\fontsize{9}{9}\selectfont\ttfamily\upshape
akiba@preferred.jp,
kenkoooo@r.recruit.co.jp,
taro.takaguchi@nict.go.jp
}

\maketitle

\begin{abstract}
Analysis and modeling of networked objects are fundamental pieces of modern data mining.
Most real-world networks, from biological to social ones, are known to have common structural properties.
These properties allow us to model the growth processes of networks and to develop useful algorithms.
One remarkable example is the fractality of networks, which suggests the self-similar organization of global network structure.
To determine the fractality of a network, we need to solve the so-called box-covering problem, where preceding algorithms are not feasible for large-scale networks.
The lack of an efficient algorithm prevents us from investigating the fractal nature of large-scale networks.
To overcome this issue, we propose a new box-covering algorithm based on
recently emerging sketching techniques.
We theoretically show that it works in near-linear time with
a guarantee of solution accuracy.
In experiments,
we have confirmed that
the algorithm enables us to study the fractality of million-scale networks for the first time.
We have observed that its outputs are sufficiently accurate
and that its time and space requirements are orders of magnitude smaller than those of previous algorithms.
\end{abstract}

\section{Introduction}
Graph representation of real-world systems, such as social relationship, biological reactions, and hyperlink structure, gives us a strong tool to analyze and control these complex objects~\cite{Newman2010}.
For the last two decades, we have witnessed the spark of network science that unveils common structural properties across a variety of real networks.
We can exploit these frequently observed properties to model the generation processes of real networked systems~\cite{Leskovec2005} and to develop graph algorithms that are applicable to various objects~\cite{Kleinberg2000}. %
A notable example of such properties is the scale-free property~\cite{Barabasi1999,Caldarelli2007}, which manifests a power-law scaling in the vertex degree distribution and existence of well-connected vertices (often called hubs).
The scale-free property, existence of hubs especially, underlies efficient performance of practical graph algorithms on realistic networks~\cite{Akiba2013,Kempe2003}.%

Although the scale-free property inspires us to design better network models and algorithms, it is purely based on the local property of networks, i.e., the vertex degree.
Real-world networks should possess other common properties beyond the local level.
As a remarkable example of such non-local properties, the fractality of complex networks was found in network science~\cite{Song2005,Gallos2007}.
The fractality of a network suggests that the network shows a self-similar structure; if we replace groups of adjacent vertices with supervertices, the resultant network holds a similar structure to the original network (see Section~\ref{sec:problem} for its formal definition).
The fractality of networks gives us unique insights into modeling of growth processes of real-world networks~\cite{Song2006}.
In addition, fractal and non-fractal networks, even with the same degree distribution, indicate striking differences in facility of spreading~\cite{Serrano2011} and vulnerability against failure~\cite{Hasegawa2013}.
Aside from theoretical studies, the fractality provides us with useful information about network topology. Examples include the backbone structure of networks~\cite{Goh2006} and the hierarchical organization of functional modules in the Internet~\cite{Carmi2007}, metabolic~\cite{Song2006} and brain~\cite{Gallos2012} networks, to name a few.

Determination of the fractality of a network is based on the so-called box-covering problem~\cite{Song2005} (also see Section~\ref{sec:problem}).
We locally cover a group of adjacent vertices with a box such that all vertices in a box are within a given distance from each other, and then we count the number of boxes we use to cover the whole network.
In principle, we have to minimize the number of boxes that cover the network, which is known to be an NP-hard problem~(see \cite{Song2007} and references therein).
Although different heuristic algorithms are proposed in the previous work~(e.g., \cite{Song2007,Schneider2012}), they are still not so efficient as to be able to process networks with millions of vertices.
This limitation leaves the fractal nature of large-scale networks far from our understanding.

\myparagraph{Contributions}
The main contribution of the present study is to propose a new type of box-covering algorithm
that is much more scalable than previous algorithms.
In general, previous algorithms first explicitly instantiate
all boxes
and then reduce the box cover problem to the famous set cover problem.
This approach requires quadratic $\Theta(n^2)$ space
for representing neighbor sets and is obviously infeasible for large-scale networks with millions of vertices.
In contrast, the central idea underlying the proposed method
is to solve the problem in the \emph{sketch space}.
That is, we do not explicitly instantiate neighbor sets;
instead, we construct and use the \emph{bottom-$k$ min-hash sketch representation}~\cite{ads/original,ads/cohen15}
of boxes.

Technically, we introduce several new concepts and algorithms.
First, to make the sketch-based approach feasible,
we introduce a slightly relaxed problem called $(1-\epsilon)$-\textsc{BoxCover}.
We also define a key subproblem called the $(1-\epsilon)$-\textsc{SetCover} problem.
The proposed box-cover algorithm consists of two parts.
First, we generate min-hash sketches of all boxes
to reduce the $(1-\epsilon)$-\textsc{BoxCover} problem to the $(1-\epsilon)$-\textsc{SetCover} problem.
Our sketch generation algorithm does not require
explicit instantiation of actual boxes and is efficient in terms of both time and space.
Second, we apply our efficient sketch-space set-cover algorithm to obtain the final result.
Our sketch-space set-cover algorithm is based on a greedy approach,
but is carefully designed with event-driven data structure operations
to achieve near-linear time complexity.

We theoretically guarantee both the scalability and the solution quality of the proposed box-cover algorithm.
Specifically, for a given trade-off parameter $k$ and radius parameter $\ell$,
it works in $O((n + m) k \log k \min\set{\ell, \log n})$ time
and $O(nk + m)$ space.
The produced result
is a solution of \textsc{$(1-\epsilon)$-BoxCover}
within a factor $1 + 2\ln n$ of the optimum for \textsc{BoxCover}
for $\epsilon \geq 2 \sqrt{5 (\ln n) / k}$,
with a high probability that asymptotically approaches 1.

In experiments, we have confirmed the practicability of the proposed method.
First, we observed that
its outputs are quite close to those of previous algorithms
and are sufficiently accurate to
recognize networks with ground-truth fractality.
Second, the time and space requirements
are orders of magnitude smaller than those of previous algorithms,
resulting in the capability of handling large-scale networks with tens of millions of vertices and edges.
Finally, we applied our algorithm to a real-world million-scale network
and accomplish its fractality analysis for the first time.

\myparagraph{Organization}
The remainder of this paper is organized as follows.
We describe the definitions and notations in Section~\ref{sec:preliminaries}.
In Section~\ref{sec:set_cover},
we present our algorithm for sketch-space \textsc{SetCover}.
We explain our sketch construction algorithm
to complete the proposed method for \textsc{BoxCover} in Section~\ref{sec:box_cover}.
In Section~\ref{sec:improve},
we present a few empirical techniques to further improve the proposed method.
We explain the experimental evaluation of the proposed method in Section~\ref{sec:experiments}.
We conclude in Section~\ref{sec:conclusions}.

\section{Preliminaries}
\label{sec:preliminaries}

\subsection{Notations}
We focus on networks that are modeled as undirected unweighted graphs.
Let $G = (V, E)$ be a graph, where $V$ and $E$ are the vertex set and edge set, respectively.
We use $n$ and $m$ to denote $\abs{V}$ and $\abs{E}$, respectively.
For $d \geq 0$ and $v \in V$,
we define $\neighbor{v}{d}$ as the set of vertices with distance at most $d$ from $v$.
We call $\neighbor{v}{d}$ the \emph{$d$-neighbor}.
When $d=1$, we sometimes omit the subscript, i.e., $N(v) = N_1(v)$.
We also define $N_d(S)$ for a set $S \subseteq V$
as $N_d(S) = \bigcup_{v \in S}N_d(v)$.
In other words, $N_d(S)$ represents the set of vertices with distance at most $d$ from
at least one vertex in $S$.
The notations we will frequently use hereafter are summarized in Table~\ref{tbl:notation}.

\begin{table}[t]
\centering \small
\caption{Frequently used notations.}
\label{tbl:notation}
\begin{tabular}{l|l}
\toprule
\textbf{Notation} & \textbf{Description} \\
\midrule
\multicolumn{2}{c}{{\textit{(In the context of the box cover problem)}}} \\
$G = (V, E)$ & The graph. \\
$n, m$ & The numbers of vertices and edges in $G$. \\
$N_\delta(v)$ & The vertices with distance at most $\delta$ from $v$. \\
\midrule
\multicolumn{2}{c}{{\textit{(In the context of the set cover problem)}}} \\
$\set{S_p}_{p \in P}$ & The set family. \\
$n$ & The number of elements and collections. \\
\midrule
\multicolumn{2}{c}{{\textit{(Bottom-$k$ min-hash sketch)}}} \\
$k$ & The trade-off parameter of min-hash sketches.\\
$r_i$ & The rank of an item $i$. \\
$\widetilde{S}$ & The min-hash sketch of set $S$. \\
$\widetilde{C}(\widetilde{S})$ & The estimated cardinality of set $S$. \\
\bottomrule
\end{tabular}
\vspace{-1em}
\end{table}

\subsection{Bottom-{\secit{k}} Min-Hash Sketch}

In this subsection, we review the bottom-$k$ min-hash sketch and its cardinality estimator~\cite{ads/original,ads/cohen15}.
Let $X$ denote the ground set of items.
We first assign a random rank value $\rank{i} \sim U(0, 1)$ to each item $i \in X$,
where $U(0, 1)$ is the uniform distribution on $(0, 1)$.
Let $S$ be a subset of $X$.
For an integer $k \geq 1$, the bottom-$k$ min-hash sketch of $S$ is defined as $\widetilde{S}$, where
$i \in \widetilde{S} \iff \rank{i} \leq \kth\set{ \rank{j} \mid j \in S}.$
In other words, $\widetilde{S}$ is the set of vertices with the $k$ smallest rank values.
We define $\widetilde{S} = S$ if $\abs{S} < k$.

For a set $S \subseteq X$, the \emph{threshold rank} $\tau(S)$ is defined as follows.
If $\abs{S} \geq k$, $\tau(S) = k\textsf{-th}\set{ r_i \mid i \in S}$. Otherwise, $\tau(S) = (k-1) / \abs{S}$.
Note that $\tau(S) = \tau(\widetilde{S})$.
Using sketch $\widetilde{S}$, we estimate the cardinality $\abs{S}$ as
$\widetilde{C}(S) = (k - 1)/\tau(\widetilde{S})$.
Its relative error is theoretically bounded as follows.

\begin{lemma}[Bottom-$k$ cardinality estimator~\cite{ads/original,ads/cohen15}]
The cardinality estimation $\widetilde{C}(S)$ is an unbiased estimator of $\abs{S}$,
and has a coefficient of variation (CV)\footnote{The CV is the ratio of the standard deviation to the mean.} of at most $1 / \sqrt{k - 2}$.
\end{lemma}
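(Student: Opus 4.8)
The plan is to analyze the bottom-$k$ min-hash sketch estimator by working directly with the distribution of the threshold rank $\tau(S)$, since $\widetilde{C}(S) = (k-1)/\tau(\widetilde{S}) = (k-1)/\tau(S)$ depends on the sketch only through this quantity. The trivial case $\abs{S} < k$ is exact, so assume $n_S := \abs{S} \geq k$. First I would observe that when the ranks $r_i$ are i.i.d.\ uniform on $(0,1)$, the value $\tau(S)$ is exactly the $k$-th order statistic of $n_S$ such variables, which has a $\mathrm{Beta}(k, n_S - k + 1)$ distribution. From the known moments of the Beta distribution, $\mathbb{E}[1/\tau(S)] = (n_S)/(k-1)$ provided $k \geq 2$, which immediately yields $\mathbb{E}[\widetilde{C}(S)] = (k-1)\,\mathbb{E}[1/\tau(S)] = n_S = \abs{S}$, establishing unbiasedness.

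Next I would compute the second moment in the same way: $\mathbb{E}[1/\tau(S)^2] = n_S(n_S+1)/((k-1)(k-2))$ for $k \geq 3$, so $\mathbb{E}[\widetilde{C}(S)^2] = (k-1)^2 \cdot n_S(n_S+1)/((k-1)(k-2)) = (k-1) n_S (n_S+1)/(k-2)$. Subtracting $\mathbb{E}[\widetilde{C}(S)]^2 = n_S^2$ gives the variance. After simplification one gets $\mathrm{Var}(\widetilde{C}(S)) = n_S(n_S + k - 2)/(k-2) - n_S^2 = n_S(n_S - 1)/(k - 2)$, which is at most $n_S^2/(k-2)$. Dividing by $\mathbb{E}[\widetilde{C}(S)]^2 = n_S^2$, the squared CV is at most $1/(k-2)$, so the CV is at most $1/\sqrt{k-2}$, as claimed. (One can handle the case $\abs{S} = k$ separately, where $\tau(S) = (k-1)/k$ deterministically and $\widetilde{C}(S) = k$ exactly, consistent with zero error.)

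The only mildly delicate point is the bookkeeping of Beta moments: recall that for $B \sim \mathrm{Beta}(a,b)$ one has $\mathbb{E}[B^{-r}] = \frac{\Gamma(a-r)\Gamma(a+b)}{\Gamma(a)\Gamma(a+b-r)}$, valid for $a > r$, and here $a = k$, so the formulas require $k > 1$ for the mean and $k > 2$ for the variance, matching the hypotheses implicit in the $1/\sqrt{k-2}$ bound. I expect the main obstacle — really the only thing that needs care — to be making the order-statistic identification clean: namely, arguing that conditioning on $\widetilde{S}$ is irrelevant because $\tau(S)$ is a deterministic function of the multiset of ranks that survives in the sketch, and that the distribution of that $k$-th smallest rank is unaffected by which particular elements of $X \setminus S$ carry which ranks. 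Once that is pinned down, the rest is a direct moment computation, and I would cite \cite{ads/original,ads/cohen15} for the original derivation while giving this self-contained argument.
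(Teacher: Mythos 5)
The paper does not actually prove this lemma---it is imported by citation from Cohen's work on bottom-$k$ sketches---so your self-contained argument is a legitimate and indeed the standard route: for $\abs{S} = n_S \geq k$, the threshold $\tau(S)$ is the $k$-th order statistic of $n_S$ i.i.d.\ $U(0,1)$ ranks, hence $\mathrm{Beta}(k, n_S-k+1)$, and everything reduces to inverse moments of a Beta variable. The ``delicate point'' you worry about is in fact a non-issue: $\widetilde{C}(S) = (k-1)/\tau(\widetilde{S}) = (k-1)/\tau(S)$ is a deterministic function of the ranks of the elements of $S$ alone, so no conditioning on the sketch is needed.

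The gap is in the moment bookkeeping, and as written it breaks the bound. With $a = k$, $b = n_S-k+1$ you have $a+b = n_S+1$, so your own formula gives $\mathbb{E}[B^{-2}] = \frac{(a+b-1)(a+b-2)}{(a-1)(a-2)} = \frac{n_S(n_S-1)}{(k-1)(k-2)}$, not $\frac{n_S(n_S+1)}{(k-1)(k-2)}$. From your stated value the variance would be $\frac{n_S(n_S+k-1)}{k-2}$, giving squared CV $\frac{1}{k-2}\bigl(1+\frac{k-1}{n_S}\bigr) > \frac{1}{k-2}$, i.e.\ the claimed bound would \emph{not} follow; and the intermediate ``simplification'' $\frac{n_S(n_S+k-2)}{k-2} - n_S^2 = \frac{n_S(n_S-1)}{k-2}$ is not valid algebra from either second moment. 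With the corrected moment the computation closes cleanly: $\mathrm{Var}(\widetilde{C}(S)) = \frac{(k-1)n_S(n_S-1)}{k-2} - n_S^2 = \frac{n_S(n_S-k+1)}{k-2} \leq \frac{n_S^2}{k-2}$, hence CV $\leq 1/\sqrt{k-2}$ (in fact slightly better, $\sqrt{(n_S-k+1)/(n_S(k-2))}$). One further small misstatement: for $\abs{S} = k$ the paper's definition puts $\tau(S)$ equal to the \emph{maximum} of $k$ uniform ranks, i.e.\ a $\mathrm{Beta}(k,1)$ random variable, not the deterministic value $(k-1)/k$; the estimator there is unbiased but not exact, and this case is already covered by your general computation with $b=1$, so no separate treatment is needed---only $\abs{S} < k$ is exact.
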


The following corollary can be obtained by applying Chernoff bounds~\cite{ads/influence}.

\begin{corollary}
  \label{corollary:error}
  For $\epsilon>0$ and $c>1$,
  by setting $k \geq (2+c) \epsilon^{-2} \ln \abs{X}$,
  the probability of the estimation having a relative error larger than $\epsilon$ is at most $1/ \abs{X}^c$.
\end{corollary}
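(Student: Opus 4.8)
The plan is to derive Corollary~\ref{corollary:error} from the coefficient-of-variation bound in the preceding lemma by a standard Chernoff-type concentration argument, exactly as cited to~\cite{ads/influence}. The estimator $\widetilde{C}(S)$ has mean $\abs{S}$ and, since its CV is at most $1/\sqrt{k-2}$, variance at most $\abs{S}^2/(k-2)$. Moreover, the bottom-$k$ cardinality estimator has the well-known structure of an inverse-probability / exponential-spacing estimator: the threshold rank $\tau(\widetilde{S})$, conditioned on $\abs{S} \geq k$, is distributed as the $k$-th smallest of $\abs{S}$ i.i.d.\ uniforms, so $(k-1)/\tau$ is a sum-of-independent-contributions quantity amenable to multiplicative Chernoff bounds. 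I would invoke the concentration inequality in the form: for such an estimator with relative variance $\sigma^2 \le 1/(k-2)$, one has $\Pr[\,|\widetilde{C}(S) - \abs{S}| > \epsilon \abs{S}\,] \le 2\exp(-\epsilon^2 (k-2)/c')$ for a suitable constant, and then simply choose $k$ large enough in terms of $\epsilon$, $c$, and $\ln\abs{X}$ to push the right-hand side below $1/\abs{X}^c$.

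Concretely, the key steps are: (i) recall the exact sampling description of $\tau(\widetilde{S})$ and hence of $\widetilde{C}(S)$ as an order-statistic-based estimator, so the Chernoff machinery applies; (ii) state the two-sided multiplicative tail bound for this estimator, of the form $\Pr[|\widetilde{C}(S)/\abs{S} - 1| > \epsilon] \le 2e^{-\Theta(\epsilon^2 k)}$ valid for $\epsilon \in (0,1)$ — this is the quoted result from~\cite{ads/influence}; (iii) substitute $k \ge (2+c)\epsilon^{-2}\ln\abs{X}$ into the exponent, so that $e^{-\Theta(\epsilon^2 k)} \le e^{-(2+c)\ln\abs{X}}$ up to the absorbed constant, and observe that this is at most $\tfrac12 \abs{X}^{-c}$ once the constant in $\Theta(\cdot)$ and the factor $2$ are accounted for in the choice of the $(2+c)$ coefficient; (iv) conclude the failure probability is at most $1/\abs{X}^c$.

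The one genuine subtlety — and the step I would be most careful about — is matching constants: the bound as stated uses the specific coefficient $2+c$ rather than, say, $3c$ or $c$, and this precise form comes from tracking the constant hidden in the exponential tail of the bottom-$k$ estimator (which, in the exponential-spacing representation, behaves like a Gamma/Poisson tail, giving an exponent essentially $\epsilon^2 k /3$ or $\epsilon^2(k-2)/2$ depending on which side and which standard form one uses). I would therefore cite the precise concentration statement from~\cite{ads/influence} and verify that with $k \ge (2+c)\epsilon^{-2}\ln\abs{X}$ the exponent dominates $c\ln\abs{X} + \ln 2$, so the two-sided bound collapses to $\abs{X}^{-c}$; no other part of the argument is delicate. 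Everything else is a direct substitution once the tail bound is in hand.
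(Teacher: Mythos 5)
Your proposal is correct and takes essentially the same route as the paper, which gives no explicit proof at all: it simply asserts that the corollary follows by applying Chernoff bounds from the cited reference \cite{ads/influence}, i.e.\ exactly your plan of invoking the multiplicative tail bound $\Pr\left[\,\left|\widetilde{C}(S)-\abs{S}\right|>\epsilon\abs{S}\,\right]\le e^{-\Omega(\epsilon^2 k)}$ for the bottom-$k$ estimator and substituting $k \ge (2+c)\epsilon^{-2}\ln\abs{X}$. One small caution: the exponential tail comes from the Chernoff bound applied to the binomial count of items whose rank falls below a fixed threshold (equivalently, the order-statistic representation of $\tau$), not from the CV bound itself, which via Chebyshev would only yield polynomial decay --- your appeal to the order-statistic structure is the step that actually carries the argument, so lean on that rather than on the variance.
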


In addition, our algorithms
heavily rely on the \emph{mergeability} of min-hash sketches.
Suppose $S_1, S_2 \subseteq X$ and $S_3 = S_1 \cup S_2$.
Then, since $\widetilde{S}_3 \subseteq \widetilde{S}_1 \cup \widetilde{S}_2$,
$\widetilde{S}_3$ can be obtained only from $\widetilde{S}_1$ and $\widetilde{S}_2$.
We denote this procedure as \textsf{Merge-and-Purify}
(e.g., $\widetilde{S}_3 = \textsf{Merge-and-Purify}(\widetilde{S}_1, \widetilde{S}_2)$).

For simplicity, we assume that $r_i$ is unique for $i \in X$,
and sometimes identify $i$ with $r_i$.
In particular,
we use the comparison between elements
such as $i < j$ for $i, j \in X$, where we actually compare $r_i$ and $r_j$.
We also define $k\textsf{-th}(S)$ as the element with the $k$-th smallest rank in $S \subseteq X$.

\subsection{Problem Definition}\label{sec:problem}

\subsubsection{Graph Fractality}\label{sec:fractality}

The fractality of a network~\cite{Song2005} is a generalization of the fractality of a geometric object in Euclidean space~\cite{Falconer2003}.
A standard way to determine the fractality of a geometric object is to use the so-called box-counting method; we tile the object with cubes of a fixed length and count the number of cubes needed. If the number of cubes follows a power-law function of the cube length, the object is said to be fractal.
A fractal object holds a self-similar property so that we observe similar structure in it when we zoom in and out to it.

The idea of the box counting method is generalized to analyze the fractality of networks~\cite{Song2005}.
The box-covering method for a network works by covering the network with boxes of finite length $\ell$, which refers to a subset of vertices in which all vertices are within distance $\ell$.
For example, a box with $\ell = 1$ is a set of nodes all adjacent to each other.
If the number of boxes of length $\ell$ needed to cover the whole network, denoted by $b(\ell)$, follows a power-law function of $\ell$: $b(\ell) \propto \ell^{-d}$,
the network is said to be fractal. The exponent $d$ is called the fractal dimension.
As can be noticed, $b(\ell)$ crucially depend on how we put the boxes.
Theoretically, we have to put boxes such that $b(\ell)$ is minimized to assess its precise scaling.
However, this box-covering problem is NP-hard and that is why we propose our new approximation algorithm of this problem in the rest of this paper.

After computing $b(\ell)$ for a network, we want to decide whether the network is fractal or not.
A typical indicator of a non-fractal network is an exponential form: $b(\ell) \propto \exp(c \ell)$, where $c$ is a constant factor~\cite{Takemoto2014}.
Therefore, comparison of the fitting of the obtained $b(\ell)$ to power-law and exponential functions enables us to determine the fractality of the network.
Figure~\ref{fig:intro} illustrates the comparison for
{$(3,3,7)$-flower},
a network model with ground-truth fractality~\cite{Rozenfeld2007} (see Section~\ref{sec:procedure}).
Since $b(\ell)$ is closer to the power-law than to the exponential function in this case, the fitting procedure correctly indicates the fractality of this network model.

It should be noted that the fractality of a network suggests its self-similarity.
Let us aggregate the vertices in a box into a supervertex and then aggregate the edges spanning two boxes into a superedge.
Then we obtain a coarse-grained version of the original network.
If the original network is fractal, the vertex degree distributions of the original and coarse-grained networks are (statistically) the same~\cite{Song2005}.
Note that the fractality and self-similarity of a scale-free network are not equivalent, and a non-fractal scale-free network can be self-similar under certain conditions~\cite{Kim2007}.

\begin{figure}
\centering
\includegraphics[width=0.7\hsize]{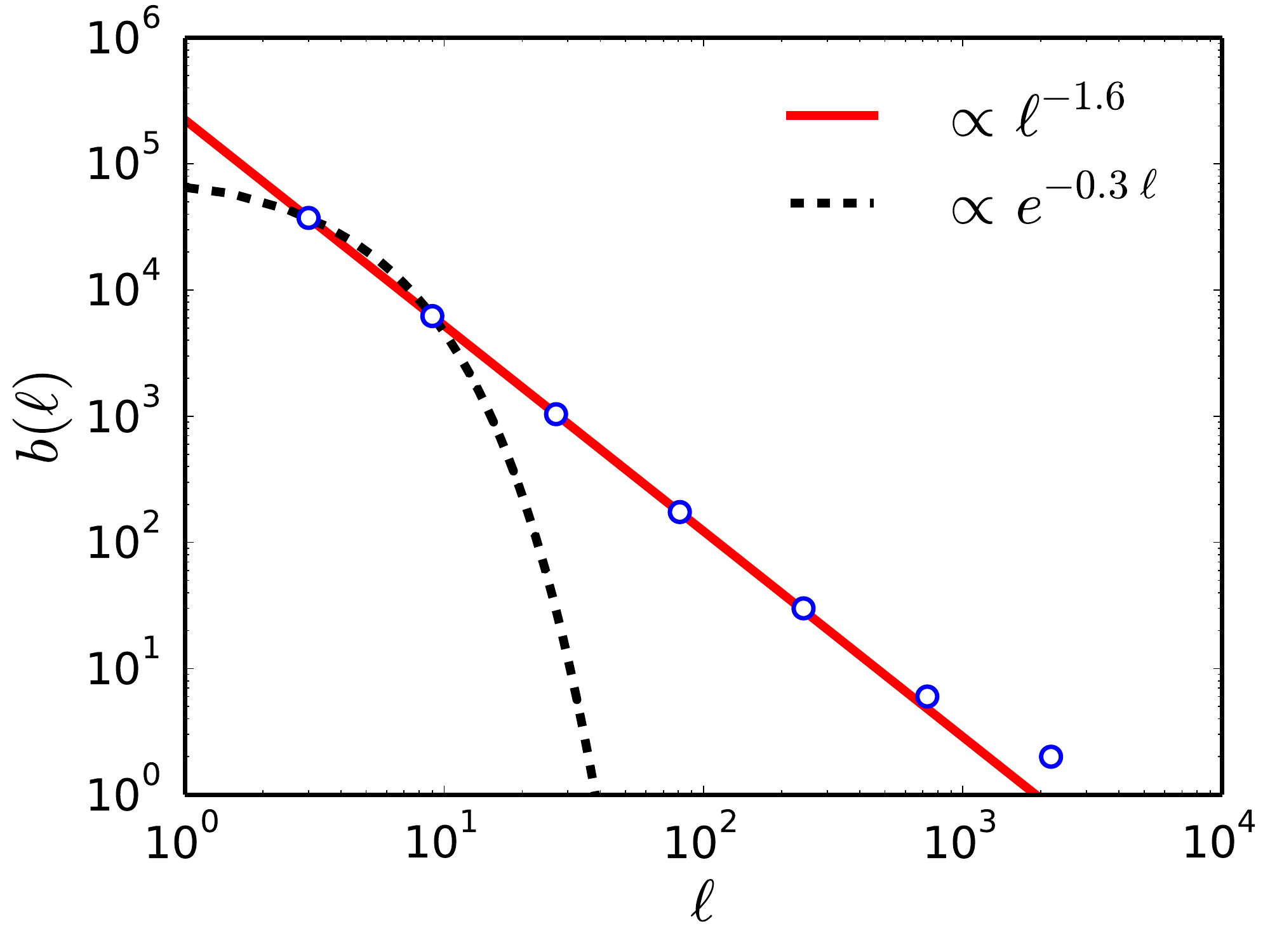}
\vspace{-1em}
\caption{Fitting of $b(\ell)$ to the power-law and exponential functions for
a synthetic fractal network.}
\vspace{-1em}
\label{fig:intro}
\end{figure}

\subsubsection{Box Cover}

As we described in the previous section, the fractality of graphs is analyzed by solving the box covering problem.
The problem has two slightly different versions:
the diameter version~\cite{Song2005} and
the radius version~\cite{Song2007}.
It has been empirically shown that these two versions yield negligible difference in the results.
In this study, we focus on the radius version, which is defined as follows.

\begin{problem}[\textsc{BoxCover}]
  \label{prb:box_cover}
In the \textsc{BoxCover} problem, given a graph $G$ and a radius limit $\ell > 0$,
the objective is to find a set $S \subseteq V$ of the minimum size
such that $N_\ell(S) = V$.
\end{problem}
The size of set $S$ is equal to $b(\ell)$ discussed in Section~\ref{sec:fractality}.
In this study, we consider a slightly relaxed variant of the \textsc{BoxCover} problem,
named $(1-\epsilon)$-\textsc{BoxCover}.
The $(1-\epsilon)$-\textsc{BoxCover} problem is defined as follows.

\begin{problem}[\textsc{$(1-\epsilon)$-BoxCover}]
  \label{prb:eps_box_cover}
  In the \textsc{$(1-\epsilon)$-BoxCover} problem, we are given a graph $G$, a radius limit $\ell > 0$
  and an error tolerance parameter $\epsilon > 0$.
The objective is to find a set $S \subseteq V$ of the minimum size
such that $\abs{N_\ell(S)} \geq (1-\epsilon)n$.
\end{problem}

\subsubsection{Set Cover}
The \textsc{BoxCover} problem is a special case of the \textsc{SetCover} problem,
which is defined as follows.

\begin{problem}[\textsc{SetCover}]
In the \textsc{SetCover} problem, we are given a set family $\set{S_p}_{p \in P}$.
The objective is to find a set $R \subseteq P$ of the minimum size
such that $\bigcup_{p \in R} S_p = \bigcup_{p \in P} S_p$.
\end{problem}

The proposed box-covering algorithm deals with a slightly different version of \textsc{SetCover},
named \emph{\textsc{$(1-\epsilon)$-SetCover} with sketched input} as a key subproblem,
which is defined as follows.

\begin{problem}[\textsc{$(1-\epsilon)$-SetCover} with sketched input]
\label{prb:eps_set_cover}
  In the sketched input version of the \textsc{$(1-\epsilon)$-SetCover} problem,
  we are given the min-hash sketches of a set family $\{\widetilde{S}_p\}_{p \in P}$
  and an error tolerance parameter $\epsilon > 0$.
The objective is to find a set $R \subseteq P$ of the minimum size
such that $| \bigcup_{p \in R} S_p | \geq (1-\epsilon) | \bigcup_{p \in P} S_p |$.
\end{problem}

We first design an efficient approximation algorithm for \textsc{$(1-\epsilon)$-SetCover}
(Section~\ref{sec:set_cover}).
We then propose a new box-covering algorithm using it (Section~\ref{sec:box_cover}).

\section{Set Cover in Sketch Space}
\label{sec:set_cover}

In this section, we design an efficient approximation algorithm
for the sketched-input version of \textsc{$(1-\epsilon)$-SetCover} (Problem~\ref{prb:eps_set_cover}).
We call each $p \in P$ a \emph{collection}
and $i \in S_p$ an \emph{element}.
Because of the connection to the \textsc{BoxCover} problems,
we assume that the numbers of collections and elements are equal.
We denote them by $n$,
that is, $\abs{P} = | \bigcup_{p \in P}S_p | = n $.
For $R \subseteq P$,
we define $S_R = \bigcup_{p \in R} S_p$.
Moreover, for simplicity,
we denote $\widetilde{C}(S_R)$ by $\widetilde{C}(R)$,
which can be calculated from merged min-hash sketch $\widetilde{S}_R$.

We first explain the basic greedy algorithm that runs in $O(n^2 k)$ time,
and then present its theoretical solution guarantee.
Finally, we propose an efficient greedy algorithm,
which runs in $O(n k \log n)$ time and produces the exact same solution as the basic algorithm.

\subsection{Basic Greedy Algorithm}
\label{sec:set_cover/basic}

Our basic greedy algorithm \textsf{Select-Greedily-Naive} is described as Algorithm~\ref{alg:greedy_basic}.
We start with an empty set $R = \set{}$.
In each iteration, we calculate $\widetilde{C}(R \cup \set{p})$ for every $p \in P \setminus R$,
and select $p$ that maximizes the estimated cardinality, and add it to $R$.
We repeat this until $\widetilde{C}(R)$ gets at least $(1 - \epsilon / 2) n$,
and the resulting $R$ is the solution.

\begin{algorithm}[t!]
\caption{\textsf{Select-Greedily-Naive}$(\{\widetilde{S}_p\}_{p \in P})$}
\small
\label{alg:greedy_basic}
\begin{algorithmic}[1]
  \setstretch{0.95}
  \State $R \gets \set{}, \widetilde{S}_R \gets \set{}$
  \While{$R \not= P$ and $\widetilde{C}(R) < (1 - \epsilon / 2)n$}  \label{alg:greedy_basic:loop}
    \State $p \gets \argmax \{ \widetilde{C}(R \cup \set{p}) \mid p \in P \setminus R \} $
    \State $R \gets R \cup \set{p}, \widetilde{S}_R \gets \textsf{Merge-and-Purify}(\widetilde{S}_R, \widetilde{S}_p)$
  \EndWhile
  \State \Return $R$
\end{algorithmic}
\end{algorithm}

To calculate $\widetilde{C}(R \cup \set{p})$,
together with $R$, we manage the merged min-hash sketch $\widetilde{S}_R$,
so that $\widetilde{S}_R$ always corresponds to the min-hash sketch of $S_R$.
To this end, we use the merger operation of min-hash sketch.
Let us assume that the items in a min-hash sketch are stored in the ascending order of their ranks.
Then, merging two min-hash sketches can be done in $O(k)$ time like in the merge sort algorithm;
we just need to pick the top-$k$ distinct items with the lowest ranks in the two min-hash sketches.
The complexity analysis of the algorithm is as follows.

\begin{lemma}
  Algorithm \textsf{Select-Greedily-Naive} runs in $O(n^2 k)$ time and $O(n k)$ space.
\end{lemma}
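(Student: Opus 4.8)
The plan is to count the basic operations in the two nested loops of \textsf{Select-Greedily-Naive} and observe that each is dominated by a merge of two min-hash sketches, which costs $O(k)$ time as already noted in the text. First I would bound the number of iterations of the outer \textbf{while} loop: since every iteration adds a new collection $p$ to $R$ and the loop terminates at the latest when $R = P$, there are at most $n$ iterations. Second, within a single iteration, the line $p \gets \argmax \{ \widetilde{C}(R \cup \set{p}) \mid p \in P \setminus R \}$ requires, for each of the at most $n$ candidates $p \in P \setminus R$, computing $\widetilde{S}_R \cup \widetilde{S}_p$ in the sense of \textsf{Merge-and-Purify} and then reading off $\tau(\widetilde{S}_{R \cup \set{p}})$ to evaluate $\widetilde{C}(R \cup \set{p}) = (k-1)/\tau$. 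Since each sketch has at most $k$ elements stored in sorted order, one such merge-and-evaluate step costs $O(k)$, so the $\argmax$ over all candidates costs $O(nk)$ per iteration. The final update of $\widetilde{S}_R$ via \textsf{Merge-and-Purify} is one more $O(k)$ merge, which is absorbed. Multiplying the per-iteration cost $O(nk)$ by the iteration count $O(n)$ gives the claimed $O(n^2 k)$ time bound.

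For the space bound I would argue that at any point the algorithm stores the current solution set $R$ (size $O(n)$), the merged sketch $\widetilde{S}_R$ (size $O(k)$), and the input sketches $\{\widetilde{S}_p\}_{p \in P}$, of which there are $n$, each of size at most $k$; this last term, $O(nk)$, dominates and yields the stated $O(nk)$ space. The temporary sketch $\widetilde{S}_{R \cup \set{p}}$ formed while scanning candidates uses only $O(k)$ additional working space and can be discarded after each candidate is evaluated, so it does not affect the bound.

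I do not expect a genuine obstacle here: the only point needing care is making explicit that each candidate evaluation is a single $O(k)$ merge rather than a recomputation from scratch, which relies on maintaining $\widetilde{S}_R$ incrementally across outer iterations (as the algorithm does) together with the merge-sort-style mergeability of bottom-$k$ sketches recalled earlier in the section. Everything else is a routine multiplication of loop bounds.
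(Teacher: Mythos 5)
Your argument matches the paper's proof (which is only a sketch): at most $n$ iterations of the while loop, each costing $O(nk)$ via $O(k)$-time sketch merges, giving $O(n^2k)$ time. Your added detail on the per-candidate merge cost and the $O(nk)$ space accounting is correct and simply fills in what the paper leaves implicit.
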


\begin{proof}[Proof Sketch]
This algorithm always terminates as, even in the worst case,
after the $n$-th iteration, $R$ gets $P$.
Each iteration takes $O(n k)$ time,
and the number of iterations is at most $n$.
Therefore, the time complexity is $O(n^2 k)$ time.
\end{proof}

\subsection{Theoretical Solution Guarantee}
\label{sec:set_cover/theory}

We can guarantee the quality of the solution produced by the above algorithm as follows.

\begin{lemma}
  \label{lemma:greedy_accuracy}
For $\epsilon \geq 2 \sqrt{5 (\ln n) / k}$,
algorithm \textsf{Select-Greedily-Naive} produces
a solution of \textsc{$(1-\epsilon)$-SetCover}
within a factor $1 + 2\ln n$ of the optimum for \textsc{SetCover}
with a probability of at least $1 - 1/n$.
\end{lemma}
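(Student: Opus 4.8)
The plan is to combine two ingredients: (i) the classical analysis of the greedy algorithm for \textsc{SetCover}, transplanted into the sketch space, and (ii) a union bound over all the cardinality estimates that the algorithm consults, using Corollary~\ref{corollary:error} to control the probability that any of them is inaccurate. Fix the "good event" $\mathcal{E}$ that \emph{every} estimate $\widetilde{C}(R)$ computed during the run has relative error at most $\epsilon/2$; condition on $\mathcal{E}$ for the deterministic part of the argument, and bound $\Pr[\neg\mathcal{E}]$ at the end.

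First I would establish $\Pr[\neg\mathcal{E}] \le 1/n$. The algorithm only ever evaluates $\widetilde{C}$ on sets of the form $S_R$ for $R \subseteq P$; although there are exponentially many such $R$, what matters is that each such estimate is determined by the merged sketch, and the total number of \emph{distinct} subsets ever queried across the whole execution is at most $n^2$ (at most $n$ iterations, each querying at most $n$ candidates $R \cup \{p\}$). Apply Corollary~\ref{corollary:error} with error parameter $\epsilon/2$ and $c = 3$: the hypothesis $\epsilon \ge 2\sqrt{5(\ln n)/k}$ gives $(\epsilon/2)^2 \ge 5(\ln n)/k$, i.e. $k \ge 5 \epsilon'^{-2}\ln n$ where $\epsilon' = \epsilon/2$, and since $5 \ge 2 + 3$ this meets the condition $k \ge (2+c)\epsilon'^{-2}\ln|X|$ with $|X| = n$. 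Hence each individual estimate fails with probability at most $1/n^3$, and a union bound over the $\le n^2$ queries gives $\Pr[\neg\mathcal{E}] \le n^2 \cdot n^{-3} = 1/n$.

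Next, conditioned on $\mathcal{E}$, I would run the standard greedy coverage argument. Let $\mathrm{OPT}$ be the optimum value of \textsc{SetCover} on $\{S_p\}$, so some $\mathrm{OPT}$ collections cover all $n$ elements. At any point in the run with current cover $R$, the true uncovered part $V \setminus S_R$ can be covered by at most $\mathrm{OPT}$ of the collections, so by averaging some $p$ increases $|S_R|$ by at least $(n - |S_R|)/\mathrm{OPT}$ — i.e. the \emph{true} marginal gain is large. The subtlety is that the algorithm picks the $p$ maximizing the \emph{estimated} cardinality $\widetilde{C}(R \cup \{p\})$, not the true one. Here the two-sided relative-error bound under $\mathcal{E}$ saves us: the estimated value of the best true choice is at least $(1-\epsilon/2)$ times its true value, and the true value of whatever the algorithm actually picks is at least $(1+\epsilon/2)^{-1}$ times its estimated value, so the picked collection's true cardinality is within a $(1-\epsilon/2)/(1+\epsilon/2) \ge 1 - \epsilon$ factor of the greedy-optimal true cardinality. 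Feeding this slightly-degraded per-step progress into the usual telescoping/geometric-decay bound, the number of true elements still uncovered after $t$ steps is at most $n\bigl(1 - (1-\epsilon)/\mathrm{OPT}\bigr)^t \le n\, e^{-(1-\epsilon)t/\mathrm{OPT}}$; choosing $t = \lceil \mathrm{OPT}\, \ln n /(1-\epsilon)\rceil$ makes this less than $1$, hence $0$, so after at most roughly $\mathrm{OPT}\ln n/(1-\epsilon) \le (1+2\ln n)\,\mathrm{OPT}$ iterations (for the relevant range of $\epsilon$ and $n\ge 2$) the true coverage is complete, and in particular $\widetilde{C}(R) \ge (1-\epsilon/2)\cdot n \ge (1-\epsilon/2)n$, so the stopping condition has been met no later than that; conversely, when the algorithm stops with $\widetilde{C}(R) \ge (1-\epsilon/2)n$, accuracy of the estimate gives $|S_R| \ge (1-\epsilon/2)^2 n \ge (1-\epsilon)n$, so $R$ is a valid $(1-\epsilon)$-\textsc{SetCover} solution.

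The main obstacle, and the step deserving the most care, is the bookkeeping in the greedy progress bound once the estimation error is folded in: one must verify that the constant in the stopping threshold ($1-\epsilon/2$ rather than, say, $1-\epsilon$) is chosen so that (a) completeness of true coverage provably triggers termination, and (b) termination provably yields $(1-\epsilon)n$ true coverage, \emph{and} that the resulting iteration count $\mathrm{OPT}\ln n/(1-\epsilon)$ is genuinely $\le (1+2\ln n)\mathrm{OPT}$ for all $\epsilon$ satisfying the hypothesis — this needs $1/(1-\epsilon) \le 1 + 2\ln n/\ln n$ type slack, which holds because the hypothesis $\epsilon \ge 2\sqrt{5(\ln n)/k}$ together with any reasonable lower bound on $k$ keeps $\epsilon$ bounded away from $1$, but the inequality should be checked explicitly. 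Everything else — the $O(k)$ merge, the union bound arithmetic, the geometric decay — is routine.
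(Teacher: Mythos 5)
Your opening and closing steps coincide with the paper's proof: applying Corollary~\ref{corollary:error} with $c=3$ (so that the hypothesis $\epsilon \geq 2\sqrt{5(\ln n)/k}$ gives per-estimate failure probability $1/n^3$), union-bounding over the at most $n^2$ estimates consulted, and, at termination, converting $\widetilde{C}(R) \geq (1-\epsilon/2)n$ into $C(R) \geq (1-\epsilon/2)\widetilde{C}(R) \geq (1-\epsilon)n$. The gap is in the middle, where you derive the iteration bound. From $\widetilde{C}(R_i \cup \{p\}) \geq \widetilde{C}(R_i \cup \{p^*\})$ you correctly get that the \emph{total} true cardinality of the picked set is at least $(1-\epsilon)$ times that of the greedy-optimal choice, but this multiplicative guarantee is on total coverage, not on the marginal gain. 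Writing $U_i = n - C(R_i)$, it yields only $U_{i+1} \leq \bigl(1 - (1-\epsilon)/\mathrm{OPT}\bigr)U_i + \epsilon\, C(R_i)$, and the additive slack $\epsilon C(R_i)$ (up to $\epsilon n$) destroys the geometric decay once $U_i$ is of order $\epsilon\,\mathrm{OPT}\, n$; your claimed bound $U_t \leq n\bigl(1-(1-\epsilon)/\mathrm{OPT}\bigr)^t$ does not follow. Worse, the conclusion you draw from it --- that the \emph{true} coverage becomes complete (``less than $1$, hence $0$'') after $O(\mathrm{OPT}\ln n/(1-\epsilon))$ steps --- cannot hold in principle: with only $(1\pm\epsilon/2)$-accurate estimates the algorithm cannot distinguish full coverage from coverage $(1-\Theta(\epsilon))n$, and it stops as soon as $\widetilde{C}(R) \geq (1-\epsilon/2)n$; this is exactly why the lemma promises only a $(1-\epsilon)$-\textsc{SetCover} solution, never exact coverage.

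The paper's proof avoids this by measuring progress with the quantity that actually appears in the stopping rule. It combines the classical greedy inequality $n - C(R_i \cup \{p\}) \leq (1-1/|R^*|)(n - C(R_i))$ with the one-sided bound $\widetilde{C}(R_{i+1}) \geq (1-\epsilon/2)\,C(R_i \cup \{p\})$ to show that the \emph{estimated} deficit $(1-\epsilon/2)n - \widetilde{C}(R_i)$ shrinks by a factor $(1-1/|R^*|)$ per iteration, so after $\lceil 2|R^*|\ln n\rceil$ iterations it drops below $1/n$, which is smaller than the resolution of $\widetilde{C}$; hence the loop condition $\widetilde{C}(R) < (1-\epsilon/2)n$ must already have failed, giving $|R| \leq (1+2\ln n)|R^*|$ directly. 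Note also that this route has no $1/(1-\epsilon)$ factor in the iteration count, whereas your bound $\mathrm{OPT}\ln n/(1-\epsilon)$ needs a separate argument that $\epsilon$ is bounded away from $1$ (which the lemma's hypothesis does not supply) to land within the stated $1+2\ln n$. To repair your proof you would need to replace the ``true uncovered count goes to zero'' argument by a potential tied to the termination threshold, which is essentially the paper's argument.
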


In other words, with a high probability
that asymptotically approaches 1,
${R}$ is the solution of \textsc{$(1-\epsilon)$-SetCover}
and $| {R}| \leq  (1 + 2 \ln n) |{R}^*|$,
where ${R}$ is the output of algorithm \textsf{Select-Greedily-Naive}
and $R^*$ is the optimum solution of \textsc{SetCover} (with the same set family as the input).

\begin{proof}
  Let ${R}$ be the output of algorithm \textsf{Select-Greedily-Naive}, and
  $R^*$ be the optimum solution to \textsc{SetCover}.
  Let ${R}_0 = \emptyset$
  and ${R}_i \subseteq P$ be the currently selected sets after the $i$-th iteration
  of algorithm \textsf{Select-Greedily-Naive}.
  Let $C(R) = | \bigcup_{p \in R} S_p |$.
  From Corollary~\ref{corollary:error},
  for set $R \subseteq P$,
  the probability of $\widetilde{C}(R)$
  having a relative error larger than $\epsilon/2$ is at most $1/n^3$.

  At the $i$-th iteration, there is some collection $p$ such that
  \begin{equation} \label{eq:1}
  n - C({R}_i \cup \set{p}) \leq
  \left(1 - \frac1{|R^*|}\right)\left(n - C({R}_i)\right),
  \end{equation}
  since otherwise there would not be
  any solution ${R}^*$ of that size to \textsc{SetCover}.
  During the $i$-th iteration,
  there are at most $n$ new sets to be examined,
  and thus the union bound implies that
  the relative error between $C$ and $\widetilde{C}$ is at most $\epsilon/2$
  with a probability of at least $1 - 1/n^2$.
  Therefore, with that probability,
  \begin{equation} \label{eq:2}
  \widetilde{C}({R}_{i+1}) \geq
  \widetilde{C}({R}_i \cup \set{p}) \geq
  \left(1 - \frac{\epsilon}{2} \right){C}({R}_i \cup \set{p}).
  \end{equation}
  From Inequalities~\ref{eq:1} and \ref{eq:2},
  with some calculation, we have
  $$
  \left( 1- \frac{\epsilon}2 \right)n - \widetilde{C}({R}_{i+1})
  \leq \left(1 - \frac1{|R^*|}\right)
  \left( \left( 1 - \frac\epsilon2 \right)n - \widetilde{C}({R}_i)\right),
  $$
  with a probability of at least $1 - 1/n^2$.
  As the number of iterations is at most $n$,
  by applying the union bound over all iterations, we obtain
  $$
  \left(1- \frac{\epsilon}2 \right) n - \widetilde{C}({R}_{i})
  \leq \left(1 - \frac1{|R^*|}\right)^i n
  < e^{-\frac{i}{| {R}^* |}} n,
  $$
  with a probability of at least $1 - 1/n$.
  If $i$ is at least $2 | {R}^* | \ln n$,
  it becomes strictly less than $1/n$, which is smaller than
  the resolution of $\widetilde{C}$.
  Therefore, the number of iterations is at most
  $\lceil 2 | {R}^* | \ln n \rceil$, and thus $| {R}| \leq \lceil 2 | {R}^* | \ln n \rceil \leq  (1 + 2 \ln n) |{R}^*|$.
  Moreover, %
  $C({R}) \geq (1 - \epsilon/2) \widetilde{C}({R})
  \geq (1 - \epsilon)n$,
  and thus ${R}$ is the solution to the $(1-\epsilon)$-\textsc{SetCover} problem.
\end{proof}

\subsection{Near-Linear Time Greedy Algorithm}
\label{sec:set_cover/fast}

Algorithm \textsf{Select-Greedily-Naive} takes quadratic time,
which is unacceptable for large-scale set families.
Therefore, we then design an efficient greedy algorithm \textsf{Select-Greedily-Fast},
which produces the exact same output as algorithm \textsf{Select-Greedily-Naive}
but runs in $O(nk \log n)$ time.
As the input size is $O(nk)$, this algorithm is near-linear time.

\definecolor{mygray}{rgb}{0.4,0.4,0.4}

\begin{algorithm}[t!]
\caption{\textsf{Select-Greedily-Fast}$(\{\widetilde{S}_p\}_{p \in P})$}
\small
\label{alg:greedy_fast}
\begin{algorithmic}[1]
\setstretch{0.95}
\State \textcolor{mygray}{\small // Initialization}
\State $R \gets \set{}, \widetilde{S}_R \gets \set{}$
\State $Q_A \gets $ an empty min-queue {\footnotesize (key: ranks, value: collections)}
\State $Q_B \gets $ an empty min-queue {\footnotesize (key: integers, value: collections)}
\For{$j = 1, 2, \dots, k$}
  \State $T(j) \gets $ a binary search tree {\footnotesize(key: ranks, value: collections)}
\EndFor
\State $I_i \gets \{p \in P \mid i \in \widetilde{S}_p \}$
\ForAll{$p \in P$}
  \State Insert $(p, k\textsf{-th}\{r_i \mid i \in \widetilde{S}_p \})$ to $Q_A$ and $T(1)$
  \State $a_p \gets k, b_p \gets 0$
\EndFor
\State \textcolor{mygray}{\small // Main loop}
\While{$R \not= P$ and $\widetilde{C}(R) < (1 - \epsilon / 2)n$}  \label{alg:greedy_fast/loop}
  \State \textcolor{mygray}{\small // Selection}
  \State $p \gets \argmax \{ \widetilde{C}(R \cup \set{p}) \mid p \text{ is at the top of } Q_A \text{ or } Q_B \} $
  \State $R \gets R \cup \set{p}$
  \State Remove $p$ from $Q_A$, $Q_B$ and $T$
  \State $\widetilde{S}'_R \gets \textsf{Merge-and-Purify}(\widetilde{S}_R, \widetilde{S}_p)$ %
  \State $\Delta \gets \widetilde{S}'_R \setminus \widetilde{S}_R$,\,\,
    $\widetilde{S}_R \gets \widetilde{S}'_R$
  \State \textcolor{mygray}{\small // Notifying events of type 3}
  \ForAll{$i \in \Delta$}
    \ForAll{$p \in I_i$}
      \If{$p \in Q_B$}
        \State Move $p$ from $T({b_p})$ to $T({b_p + 1})$
        \ElsIf{$a_p\textsf{-th}(\widetilde{S}_p) = i$}
        \State \textbf{while} $a_p\textsf{-th}(\widetilde{S}_p) = i$ \textbf{do} $a_p \gets a_p - 1$
        \State Update $p$'s key in $T({b_p + 1})$ to $a_p\textsf{-th}(\widetilde{S}_p)$
        \State Remove $p$ from $Q_A$ and Insert $(p, b_p + 1)$ to $Q_B$
      \Else
        \State Move $p$ from $T({b_p + 1})$ to $T({b_p + 2})$
      \EndIf
      \State $b_p \gets b_p + 1$
    \EndFor
  \EndFor
  \State \textcolor{mygray}{\small // Notifying events of types 1 and 2-1}
  \For{$j = 1, 2, \ldots, k$}
  \State $P' \gets $ Retrieve those with keys $\geq j\textsf{-th}(\widetilde{S}_R)$ from $T(j)$
    \ForAll{$p \in P'$}
      \State $r \gets j\textsf{-th}(\widetilde{S}_R)$
      \State Remove $p$ from $Q_A$, $Q_B$ and $T(j)$
      \If{$p \in Q_A$}
        \State $a_p \gets a_p - 1, b_p \gets b_p + 1$
        \If{$a_p\textsf{-th}(\widetilde{S}_p) \in \widetilde{S}_R$}
          \State $b_p \gets b_p - 1, \, r \gets (j-1)\textsf{-th}(\widetilde{S}_R)$
        \EndIf
      \EndIf
      \State \textbf{while} $a_p\textsf{-th}(\widetilde{S}_p) \in \widetilde{S}_R$ \textbf{do} $a_p \gets a_p - 1$
      \If{$a_p\textsf{-th}(\widetilde{S}_p) > r $}
        \State Insert $(p, a_p\textsf{-th}(\widetilde{S}_p))$ to $Q_A$ and $T({b_p + 1})$
      \Else
        \State Insert $(p, b_p)$ to $Q_B$
        \State Insert $(p, a_p\textsf{-th}(\widetilde{S}_p))$ to $T({b_p})$
      \EndIf
    \EndFor
  \EndFor
\EndWhile
\State \Return $R$
\end{algorithmic}
\end{algorithm}

The behavior of \textsf{Select-Greedily-Fast}
at a high level is the same as that of \textsf{Select-Greedily-Naive}.
That is, we start with an empty set $R = \set{}$, and,
at each iteration, it adds $p \in P \setminus R$ with the maximum gain on $\widetilde{C}$ to $R$.
The central idea underlying the speed-up is
to classify the state of each $p \in P$ at each iteration into two types and
manage differently to reduce the reevaluation of the gain.
To this end, we closely look at the relation between sketches $\widetilde{S}_p$ and $\widetilde{S}_R$.

\newcommand{\SRp}{\widetilde{S}_{R \cup p}}
\newcommand{\SR}{\widetilde{S}_{R}}
\newcommand{\Sp}{\widetilde{S}_{p}}
\newcommand{\Sq}{\widetilde{S}_{q}}
\newcommand{\SRR}{\widetilde{S}_{R'}}
\newcommand{\SRRp}{\widetilde{S}_{R' \cup p}}

\newcommand{\caseparagraph}[1]{\vspace{0.5em} \noindent \textit{\underline{#1:} }}

\myparagraph{Types and Variables}
Let us assume that we are in the main loop of the greedy algorithm.
Here, we have a currently incomplete solution $R \subset P$.
Let $p \in P \setminus R$.
We define that $p$ belongs to \emph{\underline{type A}} if
the $k$-th element of $\SRp$ is in $S_p$, i.e., $k\textsf{-th}(\SRp) \in S_p$.
Otherwise, $p$ is \emph{\underline{type B}}.
Note that $\SRp = \textsf{Merge-and-Purify} \allowbreak (\SR, \allowbreak \Sp)$.

We define %
$a_p = |{\SRp \cap \Sp}|$, $b_p = |{\SRp \cap \SR}|$, and
$c_p = |{\SR \cap \Sp}|$.
Please note that, if $p$ is a type-A collection,
$\widetilde{C}(R \cup \set{p})$
is determined by $k\textsf{-th}(\SRp) = a_p\textsf{-th}(\Sp)$.
Similarly, if $p$ is a type-B collection,
$\widetilde{C}(R \cup \set{p})$
is determined by $k\textsf{-th}(\SRp) = b_p\textsf{-th}(\SR)$.

\myparagraph{Events to be Captured}
\label{sec:set_cover/fast/event}
Suppose that we have decided to adopt a new collection
and $R$ is about to be updated to $R'$ (i.e., $R' = R \cup \set{p'}$ for some collection $p' \in P$).
Let us first assume that
a single element appeared in the merged sketch,
i.e., $\SRR \setminus \SR = \set{i}$.
Let $p \in P \setminus R'$.
In the following, we examine and classify the \emph{events}
where the evaluation of $p$ is updated,
i.e., $\widetilde{C}(R \cup \set{p}) \not= \widetilde{C}(R' \cup \set{p})$
(types 1 and 2),
or $c_p$ is updated (type 3).

\caseparagraph{Type 1}
We assume that $i \not\in \Sp$ and $p$ is type A.
From the definition,
$\tau(\SRp) = a_p\textsf{-th}(\Sp)$,
and, $b_p\textsf{-th}(\SR) \leq a_p\textsf{-th}(\Sp) \allowbreak
< (b_p+1)\textsf{-th}(\SR)$.
Therefore, $\tau(\SRp) \not= \tau(\SRRp)$
if and only if $(b_p+1)\textsf{-th}(\SRR) \not= (b_p+1)\textsf{-th}(\SR)$
and $(b_p+1)\textsf{-th}(\SRR) \allowbreak  < a_p\textsf{-th}(\Sp)$.
We define that a \emph{type-1 event} happens to $p$ when this condition holds.

\caseparagraph{Type 2}
Similarly, we assume that $i \not\in \Sp$ and $p$ is type B.
From the definition, $\tau(\SRp) = b_p\textsf{-th}(\SR)$
and $a_p\textsf{-th}(\Sp) < b_p\textsf{-th}(\SR)$.
Thus $\tau(\SRp) \not= \tau(\SRRp)$,
if and only if $b_p\textsf{-th}(\SR) \allowbreak \not= b_p\textsf{-th}(\SRR)$.
There are two cases:
$b_p\textsf{-th}(\SRR) \leq a_p\textsf{-th}(\Sp)$
\textit{(\underline{type 2-1})},
after which $p$ becomes type A,
or
$b_p\textsf{-th}(\SRR) > a_p\textsf{-th}(\Sp)$
\textit{(\underline{type 2-2})},
after which $p$ still belongs to type B.

\caseparagraph{Type 3}
If $i \in \Sp$, then $c_p$ will be incremented.

\vspace{0.5em}
The following lemma is the key to the efficiency of our algorithm.

\begin{lemma}
  \label{lemma:3k}
  For each $p \in P$, throughout the algorithm execution,
  events of type 1, type 2-1, or type 3
  occur at most $3k$ times in total.
\end{lemma}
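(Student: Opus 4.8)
The plan is to prove the bound by an amortized argument in which, for a fixed $p\in P$, each event of type 1, type 2-1, or type 3 is charged to a \emph{distinct} element of $\widetilde{S}_p$ \emph{within its own type}; since $|\widetilde{S}_p|\le k$, this gives at most $k$ events of each of the three kinds and hence at most $3k$ in total. It is irrelevant whether the charges used by different types collide — only that, for each fixed type, no element of $\widetilde{S}_p$ is charged twice.

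The first step is to record two monotonicity facts. Since the greedy loop only adds collections to $R$, the sets $S_R$ and $S_{R\cup p}$ only grow, so the threshold ranks $\tau(\widetilde{S}_R)$ and $\tau(\widetilde{S}_{R\cup p})$ are non-increasing throughout the execution. Two consequences: (a) once an element leaves $\widetilde{S}_R$ it can never re-enter, so each element of $X$ is inserted into $\widetilde{S}_R$ at most once; and (b) because $\widetilde{S}_{R\cup p}\cap\widetilde{S}_p=\{\,j\in\widetilde{S}_p : r_j\le\tau(\widetilde{S}_{R\cup p})\,\}$, this set shrinks monotonically, so once an element of $\widetilde{S}_p$ drops out of the merged sketch $\widetilde{S}_{R\cup p}$ it never returns (equivalently, $a_p$ is non-increasing).

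The second step charges each event type. \emph{Type 3:} by definition a new element $i\in\widetilde{S}_p$ enters $\widetilde{S}_R$; by (a) this happens at most once per element of $\widetilde{S}_p$, so there are at most $|\widetilde{S}_p|\le k$ such events. \emph{Type 1:} here $p$ is type A, so $k\textsf{-th}(\widetilde{S}_{R\cup p})=a_p\textsf{-th}(\widetilde{S}_p)\in\widetilde{S}_p$, and by definition a type-1 event is exactly the situation in which $\tau(\widetilde{S}_{R\cup p})$ changes; combined with monotonicity it \emph{strictly decreases}, so the element currently equal to $k\textsf{-th}(\widetilde{S}_{R\cup p})$ is pushed out of the merged sketch and, by (b), stays out. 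Because $\tau(\widetilde{S}_{R\cup p})$ strictly decreases at each type-1 event and never increases in between, the elements evicted at successive type-1 events have strictly decreasing ranks, hence are distinct elements of $\widetilde{S}_p$; so there are at most $k$ type-1 events. \emph{Type 2-1:} here $p$ is type B and becomes type A; using that a type-2 event is by definition a move of $b_p\textsf{-th}(\widetilde{S}_R)$, i.e.\ of $\tau(\widetilde{S}_{R\cup p})$, monotonicity again makes this a strict decrease, and since $p$ is type A afterward, the new value $k\textsf{-th}(\widetilde{S}_{R\cup p})$ is the rank of some element of $\widetilde{S}_p$. Arguing as before, the ``new $k$-th'' elements produced at successive type-2-1 events have strictly decreasing ranks and are therefore distinct elements of $\widetilde{S}_p$, giving at most $k$ type-2-1 events. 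Summing the three bounds yields the claim.

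The part requiring the most care is the bookkeeping that every type-1 and every type-2-1 event forces a \emph{strict} decrease of $\tau(\widetilde{S}_{R\cup p})$: this is precisely where the event definitions above must be invoked, namely that the defining condition of a type-1 (resp.\ type-2) event is exactly ``$\tau(\widetilde{S}_{R\cup p})$ changes,'' which, together with the monotonicity from the first step, upgrades ``changes'' to ``strictly decreases.'' One should also check the boundary regime $|\widetilde{S}_p|<k$ (where $\widetilde{S}_p=S_p$ and $\tau(\widetilde{S}_p)=(k-1)/|S_p|$), but there $\widetilde{S}_p\subseteq\widetilde{S}_{R\cup p}$ until $|S_{R\cup p}|\ge k$, after which the arguments apply verbatim, so this case causes no difficulty.
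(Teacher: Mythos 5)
Your argument is correct, but it is not the proof the paper gives. The paper's proof sketch is a single potential-function argument: it tracks $\Phi = k - a_p + b_p + c_p$, notes $\Phi = 0$ initially, shows every type-1, type-2-1, or type-3 event increases $\Phi$ by at least one, and bounds $\Phi \leq 3k$ via $a_p \geq 0$, $b_p, c_p \leq k$. You instead bound each event type separately by an injective charging to elements of $\widetilde{S}_p$: type-3 events are charged to the (at most once per element) arrivals of elements of $\widetilde{S}_p$ into $\widetilde{S}_R$, and type-1 and type-2-1 events are charged to the evicted (resp.\ newly promoted) $k$-th element of the merged sketch, with distinctness following from the strict, monotone decrease of $\tau(\widetilde{S}_{R \cup p})$ at each such event. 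This buys you a slightly sharper statement (at most $k$ events of \emph{each} kind, not just $3k$ in total) and makes explicit the monotonicity facts --- thresholds never increase, evicted elements never return --- that the paper's potential argument uses only implicitly (e.g.\ that $a_p$ is non-increasing and $b_p, c_p$ non-decreasing, so that $\Phi$ never loses the progress it gains); the paper's version is in turn more compact and ties directly to the counters $a_p, b_p, c_p$ maintained by Algorithm~2. The one place where you are no more rigorous than the paper is the regime where the merged sketch $\widetilde{S}_{R \cup p}$ has fewer than $k$ elements, where the type-A/type-B classification via the $k$-th element is not defined; your closing remark addresses the case $|\widetilde{S}_p| < k$ rather than $|S_{R \cup p}| < k$, but since the paper's own event taxonomy and proof sketch idealize this away as well, this does not constitute a gap relative to the paper's standard of rigor.
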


\begin{proof}[Proof Sketch]
  We use the progress indicator $\Phi = k - a_p + b_p + c_p$.
  Initially, $a_p = k$ and $b_p = c_p = 0$;
  hence $\Phi = 0$.
  For each event occurrence, $\Phi$ increases by at least one.
  As $a_p \geq 0$ and $b_p, c_p \leq k$,
  $\Phi \leq 3k$. %
\end{proof}

Please note that events of type 2-2 are not considered in the above lemma,
and, indeed, they happen $\Theta(n)$ times in the worst case for each collection.
Therefore, we design the algorithm
so that we do not need to capture type-2-2 events.

\myparagraph{Finding the Maximum Gain}
To adopt a new collection in each iteration,
we need to efficiently find the collection that gives the maximum gain.
We clarify the ordering relation in each type.

\caseparagraph{Type A}
Let $p, q \in P \setminus R$ be type-A collections,
then $\widetilde{C}(R \cup \set{p}) \geq \widetilde{C}(R \cup \set{q})$
if and only if
$a_p\textsf{-th}(\Sp) \leq a_q\textsf{-th}(\Sq)$.

\caseparagraph{Type B}
Let $p, q \in P \setminus R$ be type-B collections,
then $\widetilde{C}(R \cup \set{p}) \geq \widetilde{C}(R \cup \set{q})$
if and only if
$b_p\textsf{-th}(\SR) \leq b_q\textsf{-th}(\SR)$,
which is equivalent to $b_p \leq b_q$.

\myparagraph{Data Structures}
We use the following data structures to notify the
collections about an event occurrence.

\caseparagraph{Type 1}
For each type-A collection $p$,
as we observed above,
$p$ wants to be notified about a type-1 event
when $(b_p+1)\textsf{-th}(\SR)$ becomes smaller than $a_p\textsf{-th}(\Sp)$.
Therefore, for $j = 1, 2, \ldots, k$,
we prepare a binary search tree $T(j)$,
where values are collections and keys are ranks
(i.e., collections are managed in the ascending order of ranks in each tree).
For each type-A collection $p$,
we put $p$ in $T(b_p + 1)$ with key $a_p\textsf{-th}(\Sp)$.
Then, when $j\textsf{-th}(\SR)$ is updated to a new value,
from $T(j)$, we retrieve collections with keys larger than or equal to the new value
and notify them about an event.

\caseparagraph{Type 2-1}
Similarly, for each type-B collection $p$,
$p$ wants to be notified about a type-2-1 event
when $b_p\textsf{-th}(\SR)$ becomes smaller than or equal to $a_p\textsf{-th}(\Sp)$.
Thus, we store $p$ in $T(b_p)$ and set its key to $a_p\textsf{-th}(\Sp)$.
Then, when $j\textsf{-th}(\SR)$ is updated to a new value,
we retrieve those in $T(j)$ with keys larger than or equal to the new value
and notify them about an event.

\caseparagraph{Type 3}
To capture type-3 events,
the use of an inverted index suffices.
That is, for each $i \in X$,
we precompute $I_i = \{p \in P \mid i \in \Sp \}$.
When $i$ comes to $\SR$, we notify the collections in $I_i$.

\vspace{0.5em}Moreover, we also need data structures to
find the collections with the maximum gain as follows.

\caseparagraph{Type A}
Type-A collections are managed in a minimum-oriented priority queue,
where the key of a collection $p$ is $a_p\textsf{-th}(\Sp)$.

\caseparagraph{Type B}
Type-B collections are managed in another minimum-oriented priority queue,
where the key of a collection $p$ is $b_p$.

\myparagraph{Overall Set-Cover Algorithm}
The overall algorithm of \textsf{Select-Greedily-Fast} is described as Algorithm~\ref{alg:greedy_fast}.
In each iteration, we adopt the new collection with maximum gain,
which can be identified by comparing the top elements of the two priority queues.
Then, we process events to update variables and data structures.
At the beginning of Section~\ref{sec:set_cover/fast/event},
we assumed that a single element appears in the new sketch.
When more than one elements come to the new sketch,
we basically process each of them separately.
See Algorithm~\ref{alg:greedy_fast} for the details of the update procedure.
The algorithm complexity and solution quality are guaranteed as follows.

\begin{lemma}
  \label{lemma:greedy_fast}
  Algorithm \textsf{Select-Greedily-Fast}
  runs in $O(nk \log n)$ time and $O(nk)$ space.
\end{lemma}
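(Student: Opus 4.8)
The plan is to argue that \textsf{Select-Greedily-Fast} produces the same output as \textsf{Select-Greedily-Naive} (correctness of the event-driven simulation) and then charge the total running time against the events processed, using Lemma~\ref{lemma:3k} as the central counting ingredient. For correctness, I would verify the maintained invariant: at the start of each iteration of the main loop, for every $p \in P \setminus R$, the variables $a_p, b_p$ and the membership of $p$ in $Q_A$ versus $Q_B$ correctly reflect the current relation between $\widetilde{S}_p$ and $\widetilde{S}_R$ (so that $a_p\textsf{-th}(\widetilde{S}_p)$ for type-A and $b_p$ for type-B determine $\widetilde{C}(R \cup \set{p})$, by the ordering relations established above). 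Given this invariant, the \textbf{Selection} step picks the true $\argmax$ of $\widetilde{C}(R \cup \set{p})$ by comparing the tops of the two queues, exactly as in the naive algorithm, and hence the sequence of adopted collections and the stopping condition coincide. The bulk of the correctness argument is then showing that the event-notification blocks (types 3, 1, and 2-1) restore the invariant after $\widetilde{S}_R$ changes: each element $i \in \Delta$ newly entering the merged sketch is handled by the type-3 loop over $I_i$ (updating $c_p$-related bookkeeping via $b_p$ and tree moves), and each update of $j\textsf{-th}(\widetilde{S}_R)$ triggers retrieval of the affected collections from $T(j)$, promoting type-B collections to type A when a type-2-1 event fires. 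I would check case by case that the pointer $a_p$ is advanced past elements of $\widetilde{S}_p$ that have migrated into $\widetilde{S}_R$, and that $p$ is reinserted into the correct queue and the correct $T(\cdot)$.

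For the time bound, the initialization costs $O(nk)$ for building the inverted indices $I_i$ and $O(nk \log n)$ for the $n$ initial insertions into $Q_A$ and $T(1)$. In the main loop, there are at most $2|R^*|\ln n = O(n\log n)$ iterations, but more usefully at most $n$; each iteration does one \textsf{Merge-and-Purify} in $O(k)$ time and one priority-queue extraction. The key point is that all remaining work is performed while \emph{notifying an event}, and each notification is an $O(\log n)$ binary-search-tree or priority-queue operation. Type-2-2 events are deliberately not materialized (as remarked after Lemma~\ref{lemma:3k}), so every notification the algorithm actually performs is of type 1, type 2-1, or type 3. By Lemma~\ref{lemma:3k}, each collection $p$ receives at most $3k$ such notifications over the whole execution, so the total number of notifications is $O(nk)$, giving $O(nk\log n)$ for all the event processing. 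The per-iteration $O(k)$ merges and $O(\log n)$ extractions over $O(n)$ iterations contribute only $O(nk + n\log n)$, which is dominated. Summing, the running time is $O(nk\log n)$. The space bound is immediate: $R$, $\widetilde{S}_R$, the two queues, and the trees $T(1),\dots,T(k)$ together hold $O(nk)$ entries (each $p$ appears in $O(1)$ of them at a time), and the inverted index $\{I_i\}$ has total size $\sum_i |I_i| = \sum_p |\widetilde{S}_p| = O(nk)$.

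The main obstacle I anticipate is the correctness half, specifically the careful case analysis verifying that the type-1 and type-2-1 notification code (the inner \textbf{for} $j = 1,\dots,k$ block) correctly distinguishes the sub-cases and updates $a_p$, $b_p$, and the queue/tree membership consistently with the definitions of type A/B; in particular one must confirm that handling the elements of $\Delta$ one at a time (rather than all at once) still leads to the same final state, and that no needed type-1 or type-2-1 event is missed because the relevant tree $T(j)$ was not consulted. Once the invariant is pinned down precisely, the timing argument is a routine amortization keyed to Lemma~\ref{lemma:3k}.
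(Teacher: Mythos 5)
Your proposal is correct and follows essentially the same route as the paper: the complexity bound is obtained by charging every data-structure operation ($O(\log n)$ each) to an event of type 1, 2-1, or 3, of which Lemma~\ref{lemma:3k} guarantees at most $3k$ per collection, yielding $O(nk\log n)$ time and $O(nk)$ space (the paper states the output-equivalence with \textsf{Select-Greedily-Naive}, which you also discuss, as the separate Lemma~\ref{lemma:greedy_fast_correct}). Your write-up is simply a more detailed version of the paper's proof sketch, additionally accounting for initialization, merges, and the inverted index.
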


\begin{proof}[Proof Sketch]
  Each data structure operation takes $O(\log n)$ time,
  which, from Lemma~\ref{lemma:3k}, happens at most $3k$ times for each collection.
\end{proof}

\begin{lemma}
  \label{lemma:greedy_fast_correct}
  Algorithm \textsf{Select-Greedily-Fast}
  produces the same solution as algorithm \textsf{Select-Greedily-Naive}.
\end{lemma}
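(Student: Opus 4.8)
**Proof plan for Lemma~\ref{lemma:greedy_fast_correct} (equivalence of \textsf{Select-Greedily-Fast} and \textsf{Select-Greedily-Naive}).**

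The plan is to prove, by induction on the number of completed iterations of the main loop, that the two algorithms maintain the same state: the same selected set $R$, the same merged sketch $\widetilde{S}_R$, and—crucially—that at the start of each iteration the auxiliary bookkeeping of \textsf{Select-Greedily-Fast} is \emph{faithful}, i.e.\ for every $p \in P \setminus R$ the recorded $a_p, b_p$ equal the true $|\widetilde{S}_{R\cup p} \cap \widetilde{S}_p|$ and $|\widetilde{S}_{R\cup p}\cap \widetilde{S}_R|$, the type classification (A vs.\ B, encoded by membership in $Q_A$ vs.\ $Q_B$) is correct, and the key of $p$ in its priority queue equals $a_p\textsf{-th}(\widetilde{S}_p)$ (type A) or $b_p$ (type B). Given faithfulness, the ordering relations established in the ``Finding the Maximum Gain'' paragraph show that the $\argmax$ over the two queue tops picks exactly the $p$ maximizing $\widetilde{C}(R\cup\{p\})$ over all $p \in P\setminus R$—the same choice \textsf{Select-Greedily-Naive} makes (ties being broken identically since ranks are unique). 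Since the loop termination condition $\widetilde{C}(R) < (1-\epsilon/2)n$ depends only on $\widetilde{S}_R$, the two loops run for the same number of iterations, and equal final $R$'s follow.

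The inductive step then reduces to showing that the event-processing code restores faithfulness after $R$ is updated to $R' = R \cup \{p'\}$. I would handle this one newly-arrived element $i \in \Delta = \widetilde{S}_{R'}\setminus \widetilde{S}_R$ at a time (processing $\Delta$ sequentially is sound because \textsf{Merge-and-Purify} can be realized as a sequence of single-element insertions, and the type/variable definitions are stated pointwise). For each such $i$ I would verify, case by case, that the code's updates match the case analysis of Section~\ref{sec:set_cover/fast/event}: (i) for $p \in I_i$ (type-3 events), $c_p$ increments, and the effect on $b_p$ and on $p$'s tree position is as coded—distinguishing whether $p$ is currently type B (key in $Q_B$ is $b_p$, so it moves $T(b_p)\to T(b_p+1)$), or type A with $a_p\textsf{-th}(\widetilde{S}_p)=i$ (so $i$ was the threshold, forcing a type-2-style transition of $p$ to type B via the \textbf{while} loop that skips over consecutive equal ranks and the $Q_A \to Q_B$ move), or type A with $a_p\textsf{-th}(\widetilde{S}_p)\neq i$ (just a tree shift $T(b_p+1)\to T(b_p+2)$); (ii) for $p \notin I_i$, the ``type 1 and 2-1'' sweep over $j=1,\dots,k$ correctly identifies those $p$ whose threshold is dislodged—namely those in $T(j)$ with key $\ge j\textsf{-th}(\widetilde{S}_R)$—and the inner block recomputes $a_p$ (via the \textbf{while} loop removing elements of $\widetilde{S}_p$ that are now absorbed into $\widetilde{S}_R$), recomputes $b_p$, decides the new type by comparing $a_p\textsf{-th}(\widetilde{S}_p)$ against the new threshold $r$, and reinserts $p$ into the correct queue and tree slot. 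I would also check the boundary conventions: the initialization correctly puts every $p$ in $T(1)$ as type A with $a_p=k$, $b_p=0$, consistent with $R=\emptyset$ (where $\widetilde{S}_{R\cup p}=\widetilde{S}_p$), and the adopted $p'$ is removed from all structures before events are notified.

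The main obstacle I anticipate is precisely the deliberate omission of type-2-2 events: faithfulness as I stated it cannot be maintained for type-B collections, since a type-2-2 event changes $b_p\textsf{-th}(\widetilde{S}_R)$—hence $\widetilde{C}(R\cup\{p\})$—but the algorithm does not update $p$'s key (which is $b_p$, not the rank). The resolution, which I would argue carefully, is that the \emph{ordering} among type-B collections depends only on $b_p$ (as noted in the ``Type B'' ordering paragraph), and $b_p$ \emph{is} kept current by the type-3 handler whenever $i \in \widetilde{S}_p$; a type-2-2 event by definition does not involve $i \in \widetilde{S}_p$ and leaves $b_p$ unchanged, so it changes neither the selection-relevant key nor the type. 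Thus I would weaken the induction hypothesis for type-B collections to: $b_p$ is correct, the type is correct, and $p$'s entry in $T(b_p)$ has key $a_p\textsf{-th}(\widetilde{S}_p)$ with $a_p$ correct—and show this weaker invariant still suffices to (a) make the right greedy choice each round and (b) be promotable back to full faithfulness exactly when $p$ transitions to type A (type 2-1), at which point the code does recompute everything. Verifying that no code path relies on a stale type-B key, and that the \textbf{while}-loops correctly skip runs of equal ranks under our ``ranks are unique'' convention (so the loops in fact execute at most once, but are written defensively), are the remaining fiddly points; everything else is a mechanical match between the four cases and the four branches of the pseudocode.
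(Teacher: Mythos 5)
Your proposal is correct and follows the same route as the paper, whose entire proof is the one-sentence sketch that both algorithms pick the maximum-gain collection in each iteration; your induction on iterations with the faithfulness invariant (and the observation that untracked type-2-2 events leave the selection-relevant key $b_p$ and the type unchanged) simply supplies the data-structure details the paper leaves implicit. No gap beyond the level of detail the authors themselves omit.
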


\begin{proof}[Proof Sketch]
  Both algorithms choose the collection with the maximum gain
  in each iteration.
\end{proof}

\section{Sketch-Based Box Covering}
\label{sec:box_cover}

In this section,
we complete our sketch-based box-covering algorithm
for the $(1-\epsilon)$-\textsc{BoxCover} problem (Problem~\ref{prb:eps_box_cover}).
We first propose an efficient algorithm
to construct min-hash sketches representing the $\ell$-neighbors,
and then present and analyze the overall box-covering algorithm,

\subsection{Sketch Generation}
\label{sec:box_cover:gen}

For $v \in V$,
we denote the min-hash sketch of $\neighbor{v}{\ell}$
as $\sketch{v}{\ell}$.
Here, we construct $\sketch{v}{\ell}$ for all vertices $v \in V$
to reduce the
$(1-\epsilon)$-\textsc{BoxCover} problem to
the $(1-\epsilon)$-\textsc{SetCover} problem (Problem~\ref{prb:eps_set_cover}).
Our sketch construction algorithm \textsf{Build-Sketches}
is described as Algorithm~\ref{alg:sketch}.

It receives a graph $G$ and a radius parameter $\ell$.
Each vertex $v$ manages a tentative min-hash sketch $X_v$.
Initially, $X_v$ only includes the vertex itself, i.e., $X_v = \set{v}$,
which corresponds to $\widetilde{N}_0(v)$.
Then, we repeat the following procedure for $\ell$ times
so that, after the $i$-th iteration, $X_v = \widetilde{N}_i(v)$.
This algorithm has a similar flavor to
algorithms for
approximated neighborhood functions and all-distances sketches~\cite{ads/anf, ads/hyper_anf,ads/cohen15}.

In each iteration,
for each vertex,
we essentially merge the sketches of its neighbors into its sketch
in a message-passing-like manner.
Two speed-up techniques are employed here to avoid an unnecessary insertion check.
For $v \in V$, let $A_v$ be the vertices
in whose sketches $v$ is added to in the last iteration.
First, for each $v \in V$,
we try to insert $v$
only into the sketches of the vertices that are neighbors of $A_v$,
as $v$ cannot be inserted into other vertices.
Second, we conduct the procedure above in the increasing order of ranks,
since this decreases the unnecessary insertion.
We prove its correctness
and complexity as follows.

\begin{algorithm}[t!]
\small
\caption{\textsf{Build-Sketches}$(G, \ell)$}
\label{alg:sketch}
\begin{algorithmic}[1]
  \setstretch{0.95}
  \State $X_v \gets \set{v}$ \textbf{for all} $v \in V$.
  \For{$\ell$ times}
    \ForAll{$v \in V$ in the increasing order of $r_v$}
      \State $A_v \gets \set{u \in V \mid \text{$v$ is added to $X_u$ in the last iteration}}$
      \ForAll{$w \in N(A_v)$}
        \State $X_w \gets \textsf{Merge-and-Purify}(X_w, \set{v})$ \label{alg:sketch:insert}
      \EndFor
    \EndFor
    \If{$X_v$ was not modified for any $v \in V$}
      \State \textbf{break}
    \EndIf
  \EndFor
  \State \Return $\set{X_v}_{v \in V}$
\end{algorithmic}
\end{algorithm}

\begin{lemma}
  \label{lemma:sketch_correct_lemma}
  In algorithm \textsf{Build-Sketches},
  after the $i$-th iteration,
  $X_v = \widetilde{N}_i(v)$ for all $v \in V$.
\end{lemma}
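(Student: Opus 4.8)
The plan is to prove the statement by induction on $i$, the hypothesis being that after the $i$-th iteration $X_v = \widetilde{N}_i(v)$ for all $v \in V$. The base case is $i = 0$, i.e.\ the state right after initialization: $X_v = \set{v} = N_0(v)$, and $\abs{N_0(v)} = 1 \le k$, so $X_v = \widetilde{N}_0(v)$. Before the inductive step I would isolate two elementary facts. \textbf{(i)} For $w \in V$ and $u \in N(w)$ we have $N_{i-1}(u) \subseteq N_i(w)$; and every $v \in N_i(w) \setminus N_{i-1}(w)$ has $d(v,u) = i-1$ for the penultimate vertex $u$ on a shortest $v$--$w$ path, which moreover satisfies $u \in N(w)$. \textbf{(ii)} For finite sets of ranks $A \subseteq B$: any element of $A$ lying among the $k$ smallest ranks of $B$ also lies among the $k$ smallest of $A$; and if all of the $k$ smallest ranks of $B$ lie in $A$, then $A$ and $B$ have the same bottom-$k$ set. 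Both are immediate from the definitions.

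For the inductive step, fix an iteration $i \ge 1$ and a vertex $w$. By the inductive hypothesis for iterations $i-1$ and $i-2$ --- with the convention $\widetilde{N}_{-1}(u) = \emptyset$, under which the first iteration gives $A_v = \set{v}$, matching the initialization $X_v = \set{v}$ --- the set $A_v$ used in iteration $i$ is $\set{u \in V : v \in \widetilde{N}_{i-1}(u) \setminus \widetilde{N}_{i-2}(u)}$. Hence the vertices merged into $X_w$ during iteration $i$ are exactly those $v$ with $w \in N(A_v)$, i.e.\ the set $T_w := \bigcup_{u \in N(w)} \bigl( \widetilde{N}_{i-1}(u) \setminus \widetilde{N}_{i-2}(u) \bigr)$; this does not depend on the order in which vertices are processed, since $A_v$ refers only to additions made in the previous iteration. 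Starting from $X_w = \widetilde{N}_{i-1}(w)$ (the sketch of the set $N_{i-1}(w)$) and merging in the singletons of $T_w$ one at a time, the mergeability of sketches yields that $X_w$ after iteration $i$ equals the bottom-$k$ sketch of $Y_w := N_{i-1}(w) \cup T_w$.

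It remains to check that the bottom-$k$ sketch of $Y_w$ is $\widetilde{N}_i(w)$. By fact (i), $Y_w \subseteq N_i(w)$; hence by fact (ii) it suffices to show $\widetilde{N}_i(w) \subseteq Y_w$. Let $v \in \widetilde{N}_i(w)$. If $v \in N_{i-1}(w)$, then $v \in Y_w$ directly. Otherwise $v \in N_i(w) \setminus N_{i-1}(w)$, so by fact (i) there is $u \in N(w)$ with $d(v,u) = i-1$, i.e.\ $v \in N_{i-1}(u)$ and $v \notin N_{i-2}(u)$. Since $v$ lies among the $k$ smallest ranks of $N_i(w) \supseteq N_{i-1}(u)$ and $v \in N_{i-1}(u)$, fact (ii) gives $v \in \widetilde{N}_{i-1}(u)$; and $v \notin N_{i-2}(u) \supseteq \widetilde{N}_{i-2}(u)$. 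Therefore $v \in \widetilde{N}_{i-1}(u) \setminus \widetilde{N}_{i-2}(u) \subseteq T_w \subseteq Y_w$. This completes the induction. The early \textbf{break} is harmless: if some iteration leaves every $X_v$ unchanged, the sketches have reached a fixed point of the update rule, so they equal $\widetilde{N}_{i'}(v)$ for all larger $i'$ as well.

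The main obstacle is the middle paragraph: pinning down $A_v$ via the inductive hypothesis and, in particular, arguing that restricting to vertices ``added in the last iteration'' loses nothing --- the key point being that a vertex $v$ with $d(v,w) = i$ is \emph{newly} added at radius $i-1$ to its forwarding neighbour $u$, so it genuinely appears in $\widetilde{N}_{i-1}(u) \setminus \widetilde{N}_{i-2}(u)$. One should also be a little careful that iterating \textsf{Merge-and-Purify} over the singletons, in whatever order, truly produces the bottom-$k$ sketch of $Y_w$. The distance facts in (i) and the bottom-$k$ monotonicity facts in (ii) are routine once stated.
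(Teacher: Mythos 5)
Your proof is correct, and its skeleton is the same as the paper's: induction on $i$, the identity $N_{i+1}(v) = \set{v} \cup \bigcup_{u \in N(v)} N_i(u)$, and mergeability of bottom-$k$ sketches. The difference is one of completeness rather than of route. The paper's proof sketch argues as if each vertex merged the \emph{full} sketches $\widetilde{N}_i(u)$ of all its neighbours in every iteration; it silently passes over the speed-up actually implemented in \textsf{Build-Sketches}, namely that only elements newly added in the previous iteration are pushed, and only to $N(A_v)$. You verify precisely this point: you characterize $A_v$ as $\set{u : v \in \widetilde{N}_{i-1}(u) \setminus \widetilde{N}_{i-2}(u)}$, show that the merged-in set $T_w$ satisfies $N_{i-1}(w)\cup T_w \subseteq N_i(w)$, and---the key completeness step---that any $v \in \widetilde{N}_i(w)\setminus N_{i-1}(w)$ already lies in $\widetilde{N}_{i-1}(u)$ for its forwarding neighbour $u$ (your fact (ii)) and is \emph{newly} added there, hence gets pushed to $w$; you also handle the order-of-processing question and the early \textbf{break}, neither of which the paper's sketch touches. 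The only spot where a sentence more would help is the exact-equality claim for $A_v$: ``added in the last iteration'' coincides with ``present at the end of iteration $i-1$ but not $i-2$'' because vertices are processed in increasing rank order, so an element inserted during an iteration cannot be purged later in that same iteration; in any case your argument only uses the two inclusions, which hold regardless. Net effect: your version actually proves the algorithm as written (with its pruning) correct, at the cost of length, whereas the paper's shorter sketch strictly covers only the unoptimized merge-all-neighbours variant.
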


\begin{proof}[Proof Sketch]
  We prove the lemma by mathematical induction on $i$.
  Since $\set{v} = \widetilde{N}_0(v)$, it is true for $i = 0$.
  Now we assume it holds for $i$
and prove it also holds for $i+1$.
Let $B = \set{u \in V \mid (v, u) \in E}$.
Since ${N}_{i+1}(v) = \set{v} \cup \bigcup_{u \in B} {N}_{i}(u)$,
and $\set{v} \in N_{i}(v) \subseteq N_{i+1}(v)$,
$\widetilde{N}_{i+1}(v)$ can be obtained
by merging $\widetilde{N}_i(u)$ for all $u \in B \cup \set{v}$.
\end{proof}

\begin{corollary}
  \label{lemma:sketch_correct}
Algorithm \textsf{Build-Sketches}
computes $\sketch{v}{\ell}$ for all $v \in V$.
\end{corollary}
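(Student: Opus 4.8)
The plan is to derive the statement directly from Lemma~\ref{lemma:sketch_correct_lemma}, which already establishes the invariant $X_v = \widetilde{N}_i(v)$ after the $i$-th iteration of the outer loop. If algorithm \textsf{Build-Sketches} executes all $\ell$ iterations, then after the $\ell$-th iteration $X_v = \widetilde{N}_\ell(v) = \sketch{v}{\ell}$ for every $v \in V$, and the returned family $\set{X_v}_{v \in V}$ is exactly what is claimed.

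The only additional point to check is that the early-termination (\textbf{break}) branch does not compromise correctness. Suppose the outer loop breaks after the $i$-th iteration with $i < \ell$, because no $X_v$ was modified during that iteration. By Lemma~\ref{lemma:sketch_correct_lemma}, $X_v$ equals $\widetilde{N}_{i-1}(v)$ before iteration $i$ and $\widetilde{N}_i(v)$ afterwards; since nothing changed, $\widetilde{N}_i(v) = \widetilde{N}_{i-1}(v)$ for all $v \in V$. I would then show by induction on $j \geq i$ that $\widetilde{N}_j(v) = \widetilde{N}_i(v)$ for all $v \in V$. For the inductive step, recall from the proof of Lemma~\ref{lemma:sketch_correct_lemma} that $\widetilde{N}_{j+1}(v)$ can be obtained by merging $\widetilde{N}_j(u)$ over all $u$ in $\set{v} \cup N(v)$; by the inductive hypothesis each such $\widetilde{N}_j(u)$ equals $\widetilde{N}_i(u) = \widetilde{N}_{i-1}(u)$, so the merge yields the same value as the one that produced $\widetilde{N}_i(v)$, namely $\widetilde{N}_i(v)$ itself. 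Taking $j = \ell$ gives $\widetilde{N}_\ell(v) = \widetilde{N}_i(v) = X_v$, so the output is again $\sketch{v}{\ell}$ for every $v \in V$.

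The argument is essentially routine; the only mild subtlety --- and the step I would treat most carefully --- is the stability of the neighbor-sketch recurrence at a fixed point, i.e., that once an iteration induces no change, no later iteration would induce a change either. This rests only on the recurrence for $\widetilde{N}_{j+1}(v)$ established inside Lemma~\ref{lemma:sketch_correct_lemma} together with a one-line induction, so no genuine obstacle remains.
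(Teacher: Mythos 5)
Your proposal is correct and follows the same route as the paper: the corollary is obtained directly from the invariant $X_v = \widetilde{N}_i(v)$ of Lemma~\ref{lemma:sketch_correct_lemma} by taking $i = \ell$, which is exactly why the paper states it as an immediate corollary without further proof. Your additional fixed-point induction showing that the early \textbf{break} (when no $X_v$ changes) preserves $X_v = \widetilde{N}_\ell(v)$ is a sound elaboration of a point the paper leaves implicit, not a different argument.
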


\begin{lemma}
  \label{lemma:sketch_complexity}
  Algorithm \textsf{Build-Sketches} runs in $O((n + m)k \allowbreak \log k \allowbreak \min\set{\ell, \log n})$ expected time
  and $O(nk + m)$ space.
\end{lemma}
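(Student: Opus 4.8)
The plan is to bound separately the total work done across all iterations of \textsf{Build-Sketches} and the total space, and then use the Bottom-$k$ cardinality estimator (Lemma~1) together with a standard coupon-collector-type argument to control the number of iterations. First I would establish the space bound, which is the easy part: each vertex $v$ stores a sketch $X_v$ of size at most $k$, giving $O(nk)$; the adjacency lists require $O(m)$; and the auxiliary sets $A_v$ collectively store, in any one iteration, each ``(vertex, vertex that just received it)'' pair at most once, which is $O(nk)$ as well since each vertex can be freshly added to at most $k$ sketches per iteration (it can only live in a sketch if its rank is among the $k$ smallest). Hence $O(nk+m)$ space.

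For the time bound I would count, per iteration, the total cost of all the \textsf{Merge-and-Purify} calls on line~\ref{alg:sketch:insert}. The key observation (this is where the $\log k$ factor and the speed-up techniques matter) is: processing vertices in increasing rank order means that when we try to insert $v$ into $X_w$ for $w \in N(A_v)$, the insertion actually succeeds only if $v$'s rank is below the current threshold of $X_w$; once $X_w$ is full and $r_v$ exceeds its $k$-th rank, the call is rejected in $O(\log k)$ time (a single comparison against the max, or a binary-search location), and — crucially — all later vertices in this iteration are also rejected, so one can even short-circuit. The number of \emph{successful} insertions over the whole algorithm is $O(nk)$ (each vertex enters at most $k$ sketches and, by monotonicity of the sketch contents across iterations, never leaves), each costing $O(\log k)$. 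The number of \emph{attempted} insertions in a given iteration is $\sum_v |N(A_v)| = \sum_{u} \deg(u)\cdot[\text{$u$ was freshly modified last iteration}] = O(m)$ in the worst case; charging $O(\log k)$ to each gives $O(m\log k)$ per iteration. Summing over the $L$ iterations actually performed yields $O((nk + mL)\log k)$, and since $L \le \ell$ always and $L$ is (as argued next) $O(\log n)$ with the stated expectation, we get $O((n+m)k\log k\min\{\ell,\log n\})$.

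The remaining step — and the one I expect to be the main obstacle — is bounding the expected number of iterations by $O(\log n)$. The deterministic bound $L\le\ell$ is immediate from the \textbf{for} loop. For the $O(\log n)$ bound one argues that after $i$ iterations $X_v=\widetilde N_i(v)$ (Lemma~\ref{lemma:sketch_correct_lemma}), so an iteration changes \emph{some} sketch only if, for some $v$, the ball $N_i(v)$ is not yet ``saturated'' from $v$'s sketch perspective — i.e.\ $\widetilde N_i(v)\ne\widetilde N_{i+1}(v)$, which requires that growing the ball by one hop brings in a vertex whose rank beats the current $k$-th smallest. Fix $v$: the vertices of $N(v)$ ordered by increasing rank get absorbed into $\widetilde N_\cdot(v)$ essentially in rank order, and the event ``iteration $i{+}1$ still adds something to $X_v$'' is, up to the branching structure of the BFS from $v$, dominated by the event that a record-small rank among the first few BFS layers keeps being broken; by a coupon-collector / records argument the expected ``depth'' at which the $k$ smallest ranks in $N_{\ell}(v)$ all appear is $O(\log n)$, and a union bound over the $n$ vertices (together with the $\ell$ cap) gives expected $L=O(\min\{\ell,\log n\})$. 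Making this records argument precise — in particular correctly handling the fact that different BFS layers have different sizes and that the sketch only tracks the $k$ smallest — is the delicate part; I would model it by noting that for each vertex $u$ at BFS-distance $d$ from $v$, $u$ contributes to changing $X_v$ at iteration $d$ only if $r_u$ is smaller than the $k$-th smallest rank seen within distance $d-1$, an event that a standard analysis of all-distances sketches~\cite{ads/cohen15,ads/hyper_anf} shows happens, summed appropriately, $O(k\log n)$ times in expectation, which is absorbed into the $O(nk\log k)$ already accounted for and caps the iteration count at $O(\log n)$ in expectation. Combining the three pieces gives the stated $O((n+m)k\log k\min\{\ell,\log n\})$ expected time and $O(nk+m)$ space.
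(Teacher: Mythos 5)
Your space bound and the $O(\log k)$ cost per insertion trial match the paper, but the heart of your time analysis goes through a claim that is false: you derive the $\min\{\ell,\log n\}$ factor by asserting that the expected number of iterations $L$ is $O(\log n)$. It is not. The loop stops only when \emph{no} sketch changes, and the last iteration at which some $X_v$ changes can be of the order of the diameter: on a path (or cycle) with large $\ell$, the vertex arriving at distance $i$ from $v$ beats the current $k$-th rank of $X_v$ with probability about $k/(2i)$, so with constant probability some sketch still changes at distance $\Theta(n)$, and the expected iteration count is $\Theta(\min\{\ell,n\})$, not $O(\log n)$. The paper never bounds the number of iterations by $\log n$; instead it bounds the \emph{total number of edge traversals} in two ways and takes the minimum: (a) per iteration each sketch gains at most $k$ new elements, so each edge is traversed $O(k)$ times per iteration, giving $O((n+m)k\ell)$ overall; and (b) traversals of the edges incident to $u$ in one iteration are charged to successful insertions into $X_u$ in the previous iteration, and the expected number of successful insertions into a fixed sketch over the \emph{entire} execution is $\sum_{i=1}^{n}\min\{1,k/i\}=O(k\log n)$ (the $i$-th vertex to arrive at $u$ in distance order is inserted with probability $\min\{1,k/i\}$), giving $O((n+m)k\log n)$ traversals no matter how many iterations occur. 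Your coupon-collector/records sketch gestures at the per-vertex $O(k\log n)$ change count, but the inference from that to ``caps the iteration count at $O(\log n)$'' does not follow, since those changes can be spread over $\Theta(n)$ iterations.

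Two further accounting steps are also incorrect, though they err in the optimistic direction. First, the per-iteration number of attempted insertions is not $\sum_u \deg(u)\cdot[\text{$u$ freshly modified}]=O(m)$: a freshly modified $u$ lies in $A_v$ for \emph{every} vertex $v$ newly inserted into $X_u$, i.e., up to $k$ of them, so the correct per-iteration bound is $O((n+m)k)$, which is exactly why the lemma has the factor $k\,\min\{\ell,\log n\}$ rather than $\ell + k$. Second, ``each vertex enters at most $k$ sketches and never leaves'' is wrong on both counts: a low-rank vertex can enter up to $n$ sketches, and \textsf{Merge-and-Purify} evicts elements when smaller ranks arrive; the meaningful quantity is the number of successful insertions \emph{into} each sketch, which is $O(k\log n)$ in expectation (and $O(k\ell)$ deterministically), not $O(k)$ per vertex globally. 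Repairing your argument essentially forces you into the paper's charging scheme.
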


\begin{proof}[Proof Sketch]
  In addition to the graph,
  the algorithm stores a sketch of size $k$ for each vertex,
  and hence it works in $O(nk + m)$ space.
  Each insertion trial takes $O(\log k)$ time (Line~\ref{alg:sketch:insert}).
  Therefore, it suffices to prove that the number of traversed edges
  is $O((n + m)k \ell)$ and $O((n + m)k \log n)$.
  The former bound is easier,
  since, in each iteration,
  the number of last inserted elements in each sketch is at most $k$,
  and thus we traverse each edge at most $k$ times.

  For the latter bound,
  we count the expected number of vertices
  that are inserted once into $X_v$ for a vertex $v \in V$.
  The vertex that is $i$-th to arrive at $v$ is inserted into $X_v$ with a probability of $\min\set{1, k/i}$,
  and thus it is at most
  $$
  \sum_{i=1}^n \min\set{1, \frac{k}{i}} = k + k(H(n) - H(k)) = O(k \log n),
  $$
  where $H(i)$ is the $i$-th Harmonic number.
  Therefore, each edge is traversed at most $O(k \log n)$ times in total.
\end{proof}

\subsection{Overall Box-Cover Algorithm}
\label{sec:box_cover:overall}

The overall box-covering algorithm \textsf{Sketch-Box-Cover}
is as follows.
We first construct the min-hash sketches using algorithm \textsf{Build-Sketches}
and then solve the set cover problem in the sketch space using algorithm \textsf{Select-Greedily-Fast}.
The guarantees on performance and accuracy
are immediate from the previous lemmas and corollaries as below.

\begin{theorem}[Scalability guarantee]
  \label{theorem:performance}
  Algorithm \textsf{Sketch-Box-Cover}
  works in $O((n + m) k \log k \min\set{\ell, \log n})$ time
  and $O(nk \allowbreak + m)$ space.
\end{theorem}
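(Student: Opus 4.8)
The plan is to simply compose the two algorithmic phases of \textsf{Sketch-Box-Cover} and add up their costs, invoking the lemmas already established. Phase one is \textsf{Build-Sketches}, which by Lemma~\ref{lemma:sketch_complexity} runs in $O((n+m)k\log k \min\set{\ell,\log n})$ time and $O(nk+m)$ space; by Corollary~\ref{lemma:sketch_correct} its output is exactly $\set{\sketch{v}{\ell}}_{v\in V}$, i.e.\ a valid sketched input for the $(1-\epsilon)$-\textsc{SetCover} problem with ground set $V$ and collections indexed by $V$, so $|P| = |\bigcup_p S_p| = n$ as required by Section~\ref{sec:set_cover}. Phase two is \textsf{Select-Greedily-Fast} on this input, which by Lemma~\ref{lemma:greedy_fast} runs in $O(nk\log n)$ time and $O(nk)$ space.

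Next I would add the two bounds. For time, $O((n+m)k\log k\min\set{\ell,\log n}) + O(nk\log n)$; since $n \le n+m$ and $\log n \ge \min\set{\ell,\log n}$, but the first term carries a $\log k$ rather than a $\log n$, I need to be slightly careful: $nk\log n$ is dominated by $(n+m)k\log k\log n \ge (n+m)k\log k\min\set{\ell,\log n}$ only when $\log k$ grows, and for small $k$ the $\log n$ factor in the second term is larger. The cleanest resolution is to observe that $\min\set{\ell,\log n} \ge 1$ always (as $\ell \ge 1$), so $(n+m)k\log k\min\set{\ell,\log n} \ge nk\log k$; but that still leaves a $\log n$ versus $\log k$ gap. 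In fact the theorem statement as written, $O((n+m)k\log k\min\set{\ell,\log n})$, subsumes the set-cover term because the set-cover phase can be absorbed: $nk\log n \le nk\log k \cdot \frac{\log n}{\log k}$, which is not obviously bounded — so the honest statement is that the total is $O((n+m)k(\log k)(\min\set{\ell,\log n}) + nk\log n)$, and the paper is implicitly treating $\log n$ and $\log k$ as interchangeable up to the stated asymptotics, or assuming $k = n^{\Omega(1)}$ so that $\log k = \Theta(\log n)$. I would state the sum of the two costs and note that under the regime of interest (where $k$ is polynomially related to $n$, as forced by the accuracy requirement $\epsilon \ge 2\sqrt{5(\ln n)/k}$ only bounds $k$ from below by $\Theta(\log n)$, so this needs a remark) the bound simplifies as claimed. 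For space, $O(nk+m) + O(nk) = O(nk+m)$ immediately, with no subtlety.

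The main obstacle, then, is not a genuine mathematical difficulty but a bookkeeping one: reconciling the $\log k$ factor from \textsf{Build-Sketches} with the $\log n$ factor from \textsf{Select-Greedily-Fast} so that both fit under a single clean expression. I expect the intended argument is that the dominant term is the sketch-construction term whenever $\ell$ or $\log n$ is large and $k$ is not tiny, and that the write-up either (i) silently uses $\log n \ge \log k$ to bound the construction term's $\log k$ by $\log n$ when convenient, or (ii) relies on the set-cover term $nk\log n$ being no larger than $(n+m)k\log k\min\set{\ell,\log n}$ under the stated parameter regime. I would make this explicit in one sentence, then conclude that \textsf{Sketch-Box-Cover} inherits exactly the claimed time and space bounds by the sequential composition of the two phases.
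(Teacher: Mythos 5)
Your proposal takes essentially the same route as the paper: the paper gives no separate proof, declaring the theorem ``immediate'' from Lemma~\ref{lemma:sketch_complexity} (sketch construction: $O((n+m)k\log k\min\set{\ell,\log n})$ time, $O(nk+m)$ space) and Lemma~\ref{lemma:greedy_fast} (sketch-space greedy: $O(nk\log n)$ time, $O(nk)$ space), i.e.\ exactly the sequential composition you describe. Your side remark that the greedy term $O(nk\log n)$ is not literally dominated by $O((n+m)k\log k\min\set{\ell,\log n})$ when $\ell$ is small and $k$ is small identifies a looseness the paper silently glosses over rather than a gap in your own argument, and your honest fallback bound $O((n+m)k\log k\min\set{\ell,\log n}+nk\log n)$ is the defensible version of the claim.
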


\begin{theorem}[Solution accuracy guarantee]
  \label{theorem:accuracy}
With a probability of at least $1 - 1/n$,
for $\epsilon \geq 2 \sqrt{5 (\ln n) / k}$,
algorithm \textsf{Sketch-Box-Cover} produces
a solution to the \textsc{$(1-\epsilon)$-BoxCover} problem
within a factor $1 + 2\ln n$ of the optimum for the \textsc{BoxCover} problem.
\end{theorem}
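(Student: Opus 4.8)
The plan is to derive the statement directly from the lemmas already proved, reading \textsf{Sketch-Box-Cover} as \textsf{Build-Sketches} followed by \textsf{Select-Greedily-Fast}. First I would spell out the reduction of \textsc{BoxCover} to \textsc{SetCover}: given $G$ and $\ell$, take $P = V$ and $S_v = \neighbor{v}{\ell}$ for each $v \in V$, so that $\bigcup_{v \in V} S_v = V$ and the family has exactly $n$ collections and $n$ ground elements, matching the convention $\abs{P} = |\bigcup_{p \in P} S_p| = n$ fixed in Section~\ref{sec:set_cover}. Under this identification a set $S \subseteq V$ with $N_\ell(S) = V$ is precisely a \textsc{SetCover} solution, a set with $\abs{N_\ell(S)} \geq (1-\epsilon)n$ is precisely a \textsc{$(1-\epsilon)$-SetCover} solution, and the two optima coincide with those of \textsc{BoxCover} and \textsc{$(1-\epsilon)$-BoxCover} respectively.

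Next I would check that \textsf{Build-Sketches} produces exactly the sketched input that \textsf{Select-Greedily-Fast} consumes. By Corollary~\ref{lemma:sketch_correct}, \textsf{Build-Sketches} returns $\sketch{v}{\ell}$ for every $v \in V$, which is the bottom-$k$ min-hash sketch of $S_v = \neighbor{v}{\ell}$ with respect to the single global rank assignment $\set{\rank{v}}_{v \in V}$ on the ground set $V$. Therefore running \textsf{Select-Greedily-Fast} on $\{\sketch{v}{\ell}\}_{v \in V}$ is literally an execution of the sketched-input \textsc{$(1-\epsilon)$-SetCover} solver on the instance constructed above, and by Lemma~\ref{lemma:greedy_fast_correct} its output $R$ equals the output of \textsf{Select-Greedily-Naive} on the same sketches.

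Then I would invoke Lemma~\ref{lemma:greedy_accuracy} with ground-set size $\abs{X} = n$: for $\epsilon \geq 2\sqrt{5(\ln n)/k}$, with probability at least $1 - 1/n$ the set $R$ is a solution of \textsc{$(1-\epsilon)$-SetCover} with $\abs{R} \leq (1 + 2\ln n)\abs{R^*}$, where $R^*$ is an optimum of \textsc{SetCover} on $\{S_v\}_{v \in V}$. Pushing this back through the first step, $R$ satisfies $\abs{N_\ell(R)} \geq (1-\epsilon)n$, hence is feasible for \textsc{$(1-\epsilon)$-BoxCover}, and $\abs{R^*}$ equals the \textsc{BoxCover} optimum; combining the two gives the claimed approximation factor.

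Since every ingredient is already in place, there is no real obstacle here — the one thing that needs care is the bookkeeping that makes Lemma~\ref{lemma:greedy_accuracy} literally applicable: the quantity $n$ driving Corollary~\ref{corollary:error} must be the number of ground elements (true, as there are exactly $n$ vertices), the rank values used by \textsf{Build-Sketches} must be the same ones the cardinality estimator inside the greedy algorithm relies on (true, since both operate on bottom-$k$ sketches over the common ground set $V$), and the stopping threshold $(1-\epsilon/2)n$ in the two algorithms must refer to this same $n$. Once these identifications are made explicit, the theorem is an immediate corollary of Corollary~\ref{lemma:sketch_correct} and Lemmas~\ref{lemma:greedy_fast_correct} and~\ref{lemma:greedy_accuracy}.
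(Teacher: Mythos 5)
Your proposal is correct and follows exactly the route the paper intends: the paper states the theorem as immediate from Corollary~\ref{lemma:sketch_correct}, Lemma~\ref{lemma:greedy_fast_correct}, and Lemma~\ref{lemma:greedy_accuracy}, and your write-up simply makes explicit the same reduction ($P=V$, $S_v = \neighbor{v}{\ell}$, with $\abs{P} = \abs{\bigcup_v S_v} = n$) and the consistency of ranks and ground-set size. No gaps; your added bookkeeping is a faithful expansion of what the paper leaves implicit.
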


Assuming $k$ is a constant,
the time and space complexities are near-linear.
Similarly, given a constant $\epsilon$,
the time and space complexities are still near-linear,
since it suffices to set $k = \lceil 20 \epsilon^{-2} \ln n \rceil$.
In practice, as seen in our experiments,
the algorithm produces solutions
that are much closer to the optimum than
what is expected from the above approximation ratio
with much smaller $k$.

\section{Practical Improvement}
\label{sec:improve}
In this section, we propose techniques
to improve the practicality of the proposed method.

\myparagraph{Exact Coverage Management}
For the termination condition in the greedy selection algorithm
(i.e., Line~\ref{alg:greedy_fast/loop} in Algorithm~\ref{alg:greedy_fast}),
when applied to the box cover problem,
we propose to use the exact coverage $C(R)$
instead of the estimated coverage $\widetilde{C}(R)$.
This technique makes the results more stable.
We can efficiently manage the exact coverage as follows.

First, we prepare an array $\delta$,
and initialize it as $\delta[v] = \infty$ for all $v \in V$.
After selecting a vertex $v$ in each iteration,
we conduct a pruned breadth-first search (BFS) from $v$.
Suppose we are visiting vertex $u$ with distance $d$ in this BFS.
If $\delta[u] \leq d$, then we \emph{prune} this BFS,
i.e., we do not traverse the edges from $u$.
Otherwise, we set $\delta[u] = d$ and continue the search.
We do not visit vertices with a distance larger than $\ell$.
The number of covered vertices is the number of non-infinity values in array $\delta$.
Since the value of $\delta[u]$ changes at most $\ell + 1$ times,
each vertex or edge is visited $O(\ell)$ times.
Therefore, the total time consumption of this process throughout all iterations
is $O((n + m) \ell)$.

\myparagraph{Multi-Pass Execution}
On the basis of the above exact coverage management technique,
we sometimes detect that,
even while the estimated coverage is saturated
(i.e., $\widetilde{C}(R) = \widetilde{C}(P)$),
the actual coverage is below the specified threshold.
In that case, to choose more vertices,
we propose to repeat the algorithm from sketch construction
until the actual coverage becomes higher than the threshold.

In the $i$-th pass,
we only care for vertices that are not covered by the previous passes.
This can be easily realized by modifying the algorithm \textsf{Build-Sketches}
so that, at Line 1, we set $X_v = \emptyset$ for already covered vertices.
For accurate results, node ranks should be reassigned for each pass.

\myparagraph{Exact Neighborhood}
To further improve the accuracy,
we propose to combine our sketch-based algorithm with a non-sketch-based algorithm.
For a very small radius parameter $\ell$,
neighborhood $N_\ell(v)$ is sometimes much smaller than $k$.
Moreover, even for a larger $\ell$,
when using the above multi-pass execution technique,
the remaining neighbors may become small in later passes.
In these cases, the sketching approach has little advantage.
Therefore, we detect such circumstances
and switch to a non-sketch-based greedy algorithm.
Interestingly, this switching can be done seamlessly.
If $| \widetilde{N}_\ell(v) | \leq k$,
then $\widetilde{N}_\ell(v) = {N_\ell}(v)$.
Therefore,
under such circumstances,
the output of algorithm \textsf{Build-Sketches}
can be immediately given to the non-sketch-based greedy algorithm.

The proposed overall procedure is as follows.
We specify a parameter $\alpha$.
We start by constructing the ``sketches'' with algorithm \textsf{Build-Sketches},
but, at first, we apply the algorithm as if  $k=\infty$,
i.e., we do not conduct purification on the min-hash sketches.
During the construction, if the total number of elements in all min-hash sketches
exceeds $\alpha n k$ at some point,
then we conduct purification on all the min-hash sketches,
continue the construction with the actual $k$ value,
and pass the resulting sketches to the sketch-based greedy algorithm.
Otherwise, we apply the non-sketch-based greedy algorithm to the resulting
``sketches,'' which are actually exact neighborhood sets.
Assuming parameter $\alpha$ is a constant,
the total time and space complexity remain the same.

\myparagraph{Exact Box Covering}
\label{sec:improve/box_cover}
Together with the preceding three techniques,
to further make the results reliable,
we propose to use our algorithm
for solving the original \textsc{BoxCover} problem (Problem~\ref{prb:box_cover})
rather than $(1-\epsilon)$-\textsc{BoxCover} problem (Problem~\ref{prb:eps_box_cover}).
In other words, we recommend setting $\epsilon=0$
to ensure that all vertices are completely covered.
As we will see in the experiments,
even with this seemingly extreme threshold, thanks to the above techniques,
both the running time and the solution quality are reasonable.

\section{Experiments}

In this section, we present our experimental results
to verify the performance of the algorithm.
Specifically, we compare our algorithm with other preceding algorithms in terms of accuracy and computation time.

We mainly focus on model networks, instead of empirical ones, in order to validate the results of our algorithm with ground-truth theoretical solutions and to investigate the scalability of the algorithm for various sizes of networks.
However, we also demonstrate the practicalness of our algorithm by applying it to a real million-scale web graph.
On the basis of the result, we reveal the fractality of such large-scale real graph for the first time.

\label{sec:experiments}
\subsection{Setup}\label{sec:procedure}

\myparagraphF{Environment}
Experiments were conducted on a Linux server with Intel Xeon X5650 (2.67 GHz) and 96GB of main memory.
Algorithms were implemented in C++ and compiled by gcc 4.8.4 with \texttt{-O3} option.

\myparagraph{Algorithms}
For comparison, we used a naive algorithms named greedy coloring (GC) and three advanced and popular algorithms, named maximum excluded mass burning (MEMB), minimal value burning (MVB), and compact box burning (CBB).
GC, MEMB, and CBB were introduced in \cite{Song2007} and MVB was in \cite{Schneider2012}.

\myparagraph{Network Models}
We used two network models with ground-truth fractality: the $(u,v)$-flower~\cite{Rozenfeld2007} and the Song-Havlin-Makse (SHM)~\cite{Song2006} model.
These models have power-law degree distributions, the representative characteristic of complex networks.
Both models can be either fractal or non-fractal, depending on the structural parameter values.
We refer to them as the $(u, v, g)$-flower and $(c,e, g)$-SHM model to indicate the parameter settings.
The common parameter $g$ $(g= 1,2,3,\dots)$ determines the network size $n$: $n=(w-2/w-1)w^g + w/w-1$, where $w \equiv u+v$ for the $(u, v, g)$-flower, and $n = (2c+1)^g n_0$ for the $(c,e,g)$-SHM model.
In addition to the flower and SHM models, we considered the Barab\'{a}si-Albert (BA) network model~\cite{Barabasi1999} as one of the most famous models of complex networks.
The BA model is not fractal~\cite{Song2005}. We refer to this model as $(c, t)$-BA, where $c$ is the number of edges that a new node has and $t$ sets the network size as $n = 125 \times 2^t$.

\myparagraph{Fractality Decision Procedure}
After the computation of the box-covering algorithms, we determined whether the obtained $b(\ell)$ indicates the fractality or not.
This task was done by fitting the $b(\ell)$ curve with a power-law function (i.e., fractal) and an exponential function (i.e., non-fractal) by using \texttt{optimize.leastsq} function in \texttt{SciPy} package of Python.
We used the parameters estimated by fitting the curves to linearized models as the initial values for the nonlinear fitting.
The key quantity was the ratio between the residual error of fitting to a power-law function and that to an exponential function, denoted by $r_{\rm fit}$.
If $- \log_{10} r_{\rm fit}$ is postive (i.e., $r_{\rm fit} < 1$), the network was supposed to be fractal. Otherwise, it was supposed to be non-fractal.
This procedure of fitting and comparison follows that used in \cite{Takemoto2014}.

{
  \tabcolsep=2.5mm
  \renewcommand{\arraystretch}{0.8}
\begin{table*}[htb]
\centering \small
\caption{Running time in seconds (\emph{Time}) and the relative error ratio of a power-law function, $- \log_{10} r_{\rm fit}$ (\emph{Fit}).
DNF means that it did not finish in one day or ran out of memory.}
\label{tbl:main_table}
\begin{tabular}{lrr|rr|rr|rr|rr|rr}
\toprule
\multicolumn{3}{c|}{\textbf{Graph}} & \multicolumn{2}{c|}{\textbf{Sketch}} & \multicolumn{2}{c|}{\textbf{MEMB~\cite{Song2007}}} & \multicolumn{2}{c|}{\textbf{GC~\cite{Song2007}}} & \multicolumn{2}{c|}{\textbf{MVB~\cite{Schneider2012}}} & \multicolumn{2}{c}{\textbf{CBB~\cite{Song2007}}} \\ %
Model & $\abs{V}$ & $\abs{E}$ & Time & Fit & Time & Fit & Time & Fit & Time & Fit & Time & Fit \\ \midrule
\multicolumn{13}{l}{$\bigtriangledown$ \textit{\textbf{Networks with ground-truth fractality}}
{(``Fit'' values are expected to be positive.)}} \\
\midrule
(2, 2, 4)-flower & 172 & 256 & 0 & 0.8 & 0 & 1.0 & 0 & 28.7 & 199 & 1.0 & 0 & 28.0 \\
(2, 2, 7)-flower & 10,924 & 16,384 & 15 & 2.5 & 10 & 3.4 & 228 & 27.7 & DNF & --- & 122 & 27.4 \\
(2, 2, 10)-flower & 699,052 & 1,048,576 & 8,628 & 3.5 & DNF & --- & DNF & --- & DNF & --- & DNF & --- \\
(2, 2, 11)-flower & 2,796,204 & 4,194,304 & 62,138 & 4.0 & DNF & --- & DNF & --- & DNF & --- & DNF & --- \\
(2, 3, 6)-flower & 11,720 & 15,625 & 26 & 1.2 & 14 & 1.1 & 146 & 0.1 & DNF & --- & 5,593 & 0.5 \\
(2, 3, 7)-flower & 58,595 & 78,125 & 286 & 1.1 & 377 & 1.0 & 8,538 & 0.1 & DNF & --- & DNF & --- \\
(2, 3, 8)-flower & 292,970 & 390,625 & 2,913 & 1.0 & DNF & --- & DNF & --- & DNF & --- & DNF & --- \\
(2, 4, 6)-flower & 37,326 & 46,656 & 138 & 1.1 & 121 & 1.0 & 2,422 & 1.7 & DNF & --- & 2,559 & 0.6 \\
(2, 4, 7)-flower & 223,950 & 279,936 & 1,526 & 1.0 & DNF & --- & DNF & --- & DNF & --- & DNF & --- \\
(3, 3, 6)-flower & 37,326 & 46,656 & 148 & 1.1 & 101 & 1.3 & 10,751 & 2.1 & DNF & --- & 1,284 & 1.4 \\
(3, 3, 7)-flower & 223,950 & 279,936 & 1,779 & 1.2 & DNF & --- & DNF & --- & DNF & --- & 61,562 & 1.5 \\
(3, 4, 5)-flower & 14,007 & 16,807 & 34 & 0.7 & 16 & 0.9 & 560 & 0.1 & DNF & --- & 3,380 & -0.4 \\
(3, 4, 7)-flower & 686,287 & 823,543 & 8,873 & 0.8 & DNF & --- & DNF & --- & DNF & --- & DNF & --- \\
(2, 0, 6)-SHM & 12,501 & 12,500 & 33 & 1.2 & 8 & 1.1 & 872 & 1.1 & 32 & 1.1 & 325 & 0.7 \\
(2, 0, 7)-SHM & 62,501 & 62,500 & 224 & 1.2 & 206 & 1.1 & 48,116 & 1.1 & 1,126 & 1.1 & 6,579 & 0.9 \\
(2, 0, 8)-SHM & 312,501 & 312,500 & 2,728 & 1.1 & DNF & --- & DNF & --- & DNF & --- & DNF & --- \\
(3, 0, 6)-SHM & 67,229 & 67,228 & 207 & 1.0 & 190 & 0.9 & 21,726 & 0.9 & 628 & 0.9 & 4,623 & 0.9 \\
\midrule
\multicolumn{13}{l}{$\bigtriangledown$ \textit{\textbf{Networks with ground-truth non-fractality}}
{(``Fit'' values are expected to be negative.)}} \\
\midrule
(1, 2, 10)-flower & 29,526 & 59,049 & 108 & -2.9 & 197 & -2.9 & 286 & -5.4 & 364 & -2.2 & 21,833 & -2.6 \\
(1, 2, 11)-flower & 88,575 & 177,147 & 466 & -3.8 & 1,641 & -3.8 & 2,999 & -6.2 & 3,610 & -2.6 & DNF & --- \\
(1, 2, 12)-flower & 265,722 & 531,441 & 1,774 & -4.6 & DNF & --- & 38,278 & -7.0 & DNF & --- & DNF & --- \\
(1, 3, 7)-flower & 10,924 & 16,384 & 20 & -3.0 & 16 & -2.7 & 44 & -4.8 & DNF & --- & 1,862 & -3.3 \\
(1, 3, 8)-flower & 43,692 & 65,536 & 123 & -4.7 & 280 & -3.2 & 826 & -6.0 & DNF & --- & 61,953 & -4.4 \\
(1, 3, 9)-flower & 699,052 & 1,048,576 & 4,195 & -6.0 & DNF & --- & DNF & --- & DNF & --- & DNF & --- \\
(1, 4, 6)-flower & 11,720 & 15,625 & 23 & -1.0 & 20 & -0.6 & 53 & -1.8 & DNF & --- & 3,781 & -1.9 \\
(1, 4, 7)-flower & 58,595 & 78,125 & 223 & -0.8 & 548 & -0.7 & 1,598 & -1.8 & DNF & --- & DNF & --- \\
(1, 4, 8)-flower & 292,970 & 390,625 & 1,678 & -0.7 & DNF & --- & 67,866 & -1.9 & DNF & --- & DNF & --- \\
(2, 1, 6)-SHM & 24,885 & 31,104 & 31 & -3.5 & 32 & -3.5 & 433 & -0.4 & 126 & -3.5 & 7,129 & -2.5 \\
(2, 1, 7)-SHM & 149,301 & 186,624 & 390 & -4.9 & 1,397 & -4.9 & 17,703 & -0.4 & 8,615 & -4.9 & DNF & --- \\
(3, 1, 5)-SHM & 14,045 & 16,384 & 12 & -2.6 & 8 & -2.6 & 97 & -0.3 & 25 & -2.6 & 1,224 & -2.9 \\
(3, 1, 6)-SHM & 112,349 & 131,072 & 210 & -4.1 & 580 & -4.2 & 9,504 & -0.3 & 2,070 & -4.2 & DNF & --- \\
(2, 1)-BA & 250 & 497 & 0 & -0.9 & 0 & -0.9 & 0 & -0.6 & 54 & -0.5 & 0 & -0.3 \\
(2, 4)-BA & 2,000 & 3,997 & 1 & -2.7 & 0 & -2.0 & 2 & -0.6 & DNF & --- & 404 & -0.1 \\
(2, 7)-BA & 16,000 & 31,997 & 17 & -1.3 & 76 & -1.3 & 154 & -0.6 & DNF & --- & DNF & --- \\
(2, 10)-BA & 128,000 & 255,997 & 377 & -1.5 & 3,535 & -1.5 & 12,457 & -0.6 & DNF & --- & DNF & --- \\
(2, 13)-BA & 1,024,000 & 2,047,997 & 6,474 & -1.4 & DNF & --- & DNF & --- & DNF & --- & DNF & --- \\
(2, 15)-BA & 4,096,000 & 8,191,997 & 36,125 & -1.4 & DNF & --- & DNF & --- & DNF & --- & DNF & --- \\
\bottomrule
\end{tabular}
\end{table*}
}

\begin{figure}
\centering
\includegraphics[width=0.48\hsize]{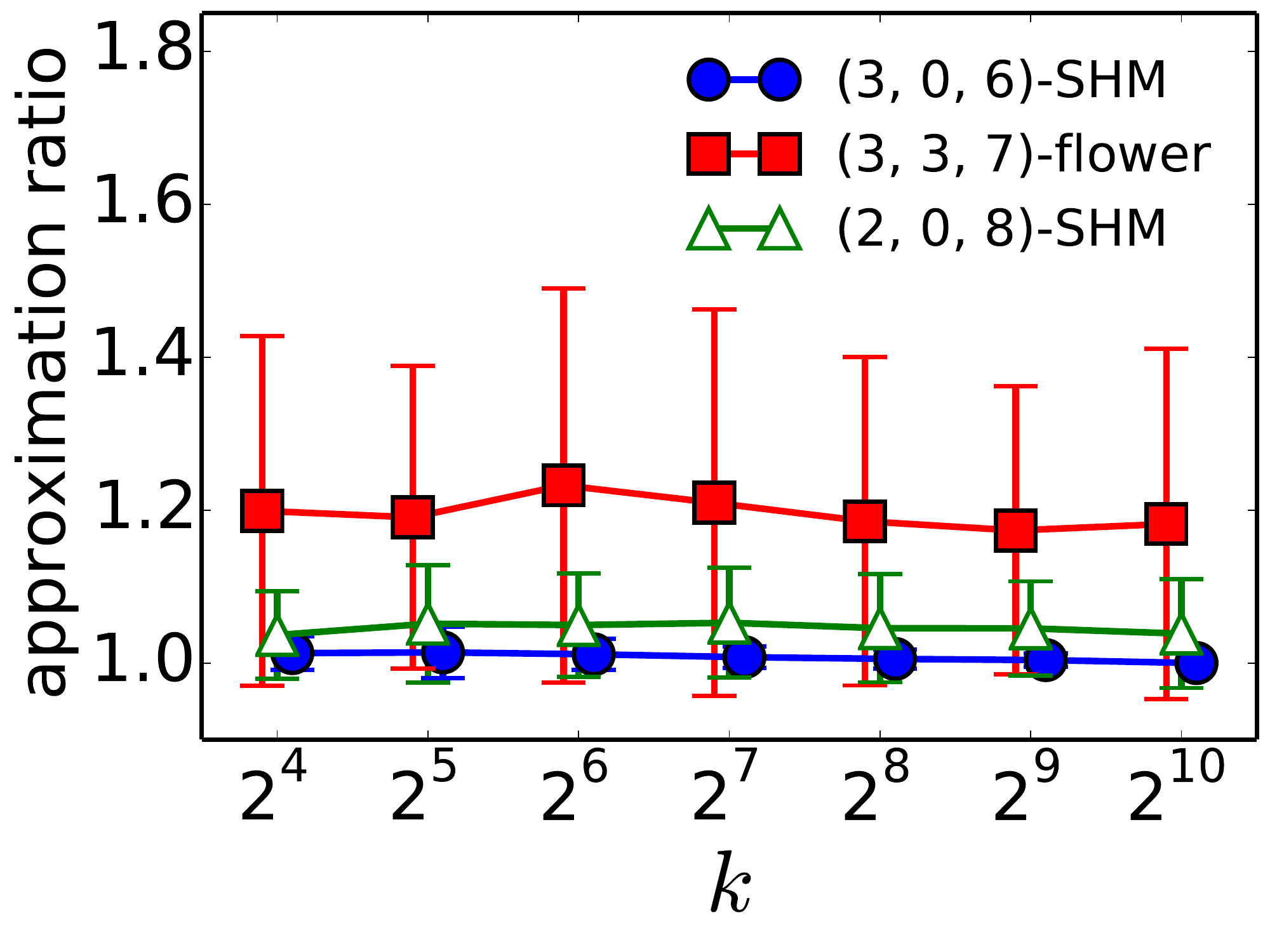}
\includegraphics[width=0.48\hsize]{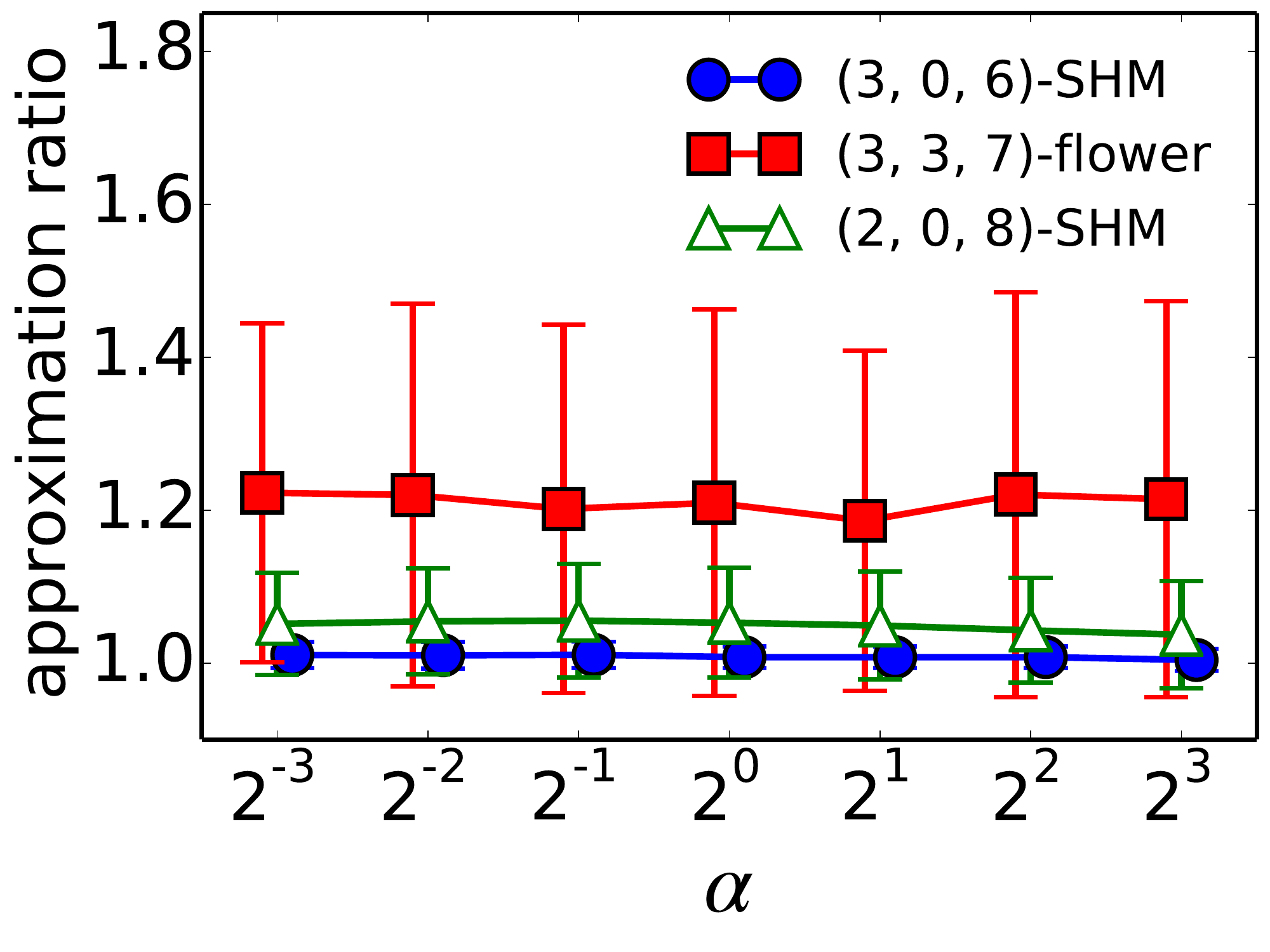}
\vspace{-1.4em}
\caption{Average approximation ratio to the theoretical solutions for various \boldmath{$k$} and \boldmath{$\alpha$}.}
\vspace{-1em}
\label{fig:k_alpha}
\end{figure}

\subsection{Parameter Settings}
First of all, we have to decide the parameter values of our algorithm:
$\epsilon$ (error tolerance),
$k$ (sketch size),
and $\alpha$ (exact neighborhood switch threshold).
In principle, the accuracy of results as well as running time increases with $k$ and $\alpha$, and it decreases with $\epsilon$.
As we discussed in Section~\ref{sec:improve/box_cover},
we fixed $\epsilon = 0$.
To choose $k$ and $\alpha$, we plotted the average approximation ratio of our results to the theoretical solutions for several fractal network models as a function of $k$ and $\alpha$ in Figure~\ref{fig:k_alpha}.
The average approximation ratio is defined by $\rho \equiv \langle b_{\rm sketch}(\ell) / b_{\rm theory}(\ell) \rangle_\ell$, where $\langle \cdot \rangle_\ell$ is the average over $\ell$. To compute $b_{\rm sketch}(\ell)$, we executed the algorithm for ten times and took the average of the resulting $b(\ell)$ over the ten runs.

In the left panel of Figure~\ref{fig:k_alpha}, we varied $2^4 \leq k \leq 2^{10}$ while fixing $\alpha = 1$.
The $\rho$ values were affected slightly by $k$ for the SHM models and tended to decrease with $k$ for the flower network.
On the basis of the results, we decided to use $k=2^7$ throughout the following experiments.
In the right panel of Figure~\ref{fig:k_alpha}, we varied $2^{-3} \leq \alpha \leq 2^{3}$ while fixing $k=2^7$.
The $\rho$ values were almost constant regardless of the $\alpha$ values for all of the three networks considered. Therefore, taking into account the running time, we decided to use $\alpha = 1$ throughout the following experiments.
It is worth noting that Figure~\ref{fig:k_alpha} also demonstrates the high accuracy and robustness of our algorithm over a broad range of parameter values.

\subsection{Accuracy and Scalability}
\label{sec:performance}
Table~\ref{tbl:main_table} summarizes the main results of this paper and shows the comparison of our algorithm ({\emph{Sketch}) with other preceding algorithms for fractal and non-fractal network models with various sizes.
We evaluated the performance of algorithms by two measures.
The first was the accuracy given by $-\log_{10} r_{\rm fit}$ (Section~\ref{sec:procedure}).
If this measure took a positive (negative) value for a fractal (non-fractal) network, the algorithm correctly distinguished the fractality of the network.
The second was computation time in seconds.

\myparagraph{Discrimination Ability}
As we can see in Table~\ref{tbl:main_table}, the sketch algorithm perfectly distinguishes between the fractal and non-fractal networks as the other algorithms do (except for CBB for $(3,4,5)$-flower).
The proposed algorithm shows its advantage in computation time: the algorithm is generally faster than other algorithms and is able to handle large networks that other algorithms do not terminate.
Although MEMB is faster than Sketch for some relatively small network models, this result is expected because actual neighborhood sets are not significantly larger than sketch sizes in these networks.
As a summary, \textit{(i)} the sketch algorithm correctly detected the fractality of network models with around ten times smaller computation time than the fastest previous algorithm.
In addition, \textit{(ii)} the algorithm was able to deal with networks with millions of nodes with acceptable computation time (within $1$ day), whereas other algorithms could not in our machine environment.

\begin{figure}
\centering
\includegraphics[width=0.48\hsize]{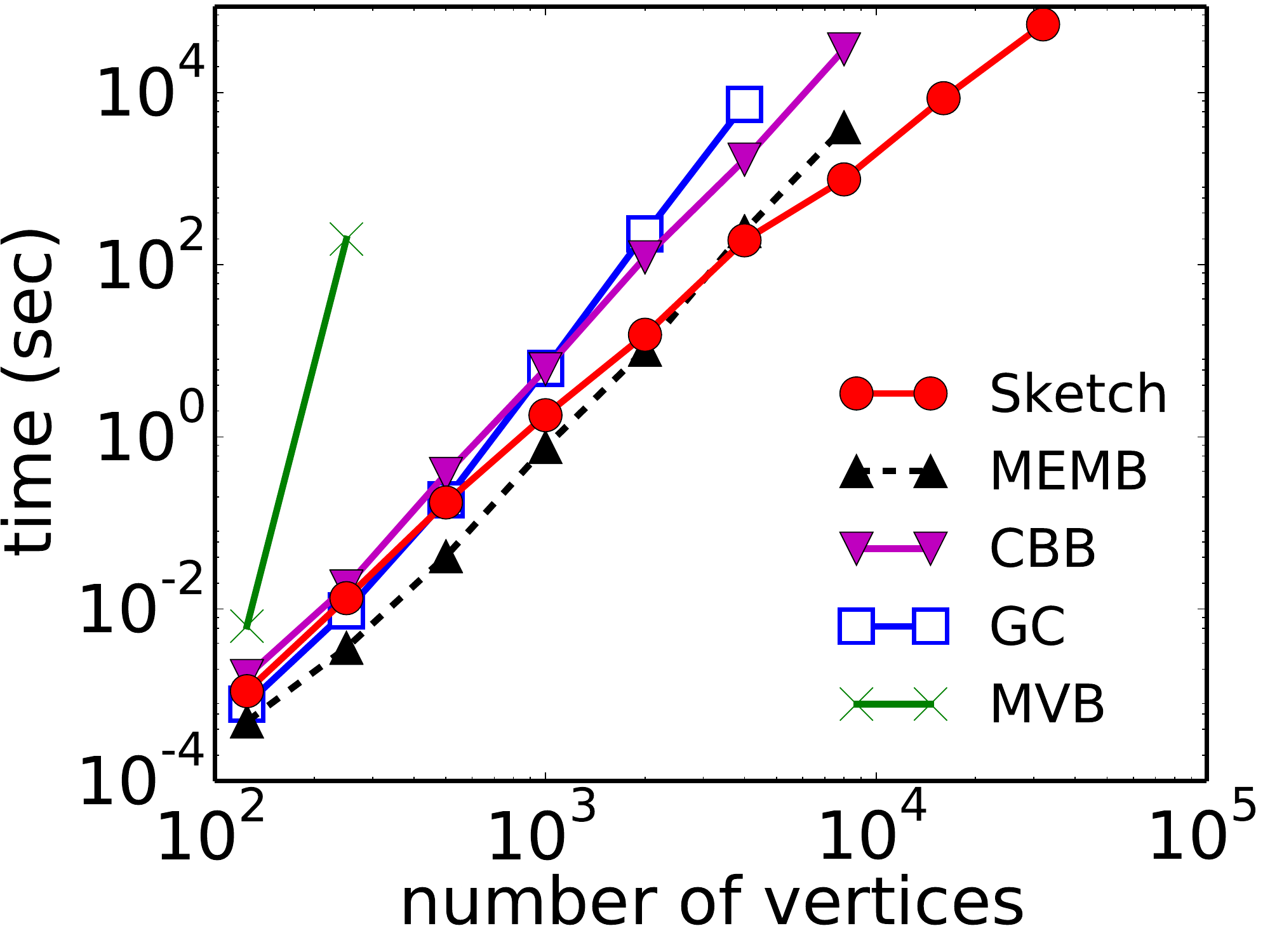}
\includegraphics[width=0.48\hsize]{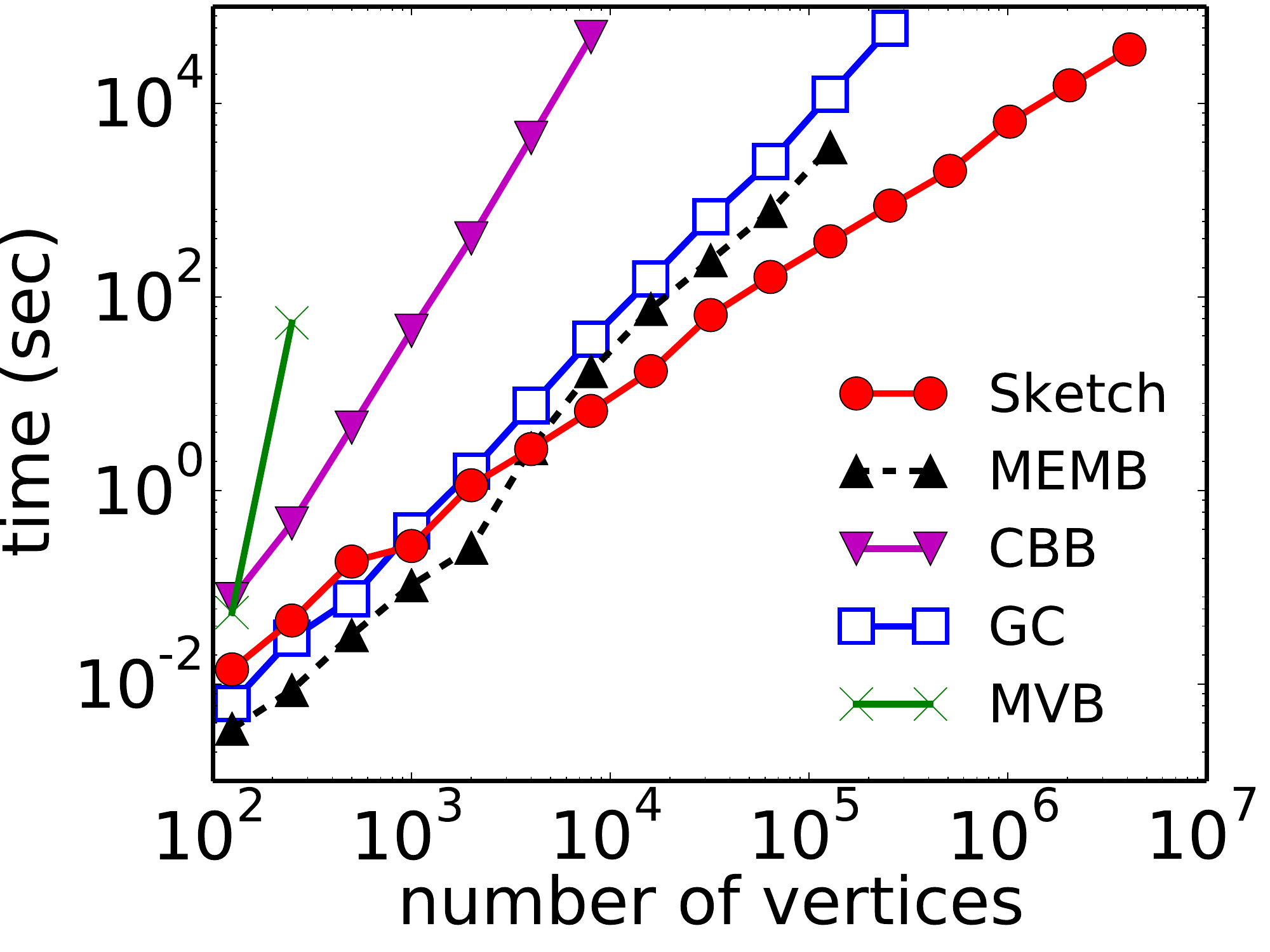}\\
\includegraphics[width=0.48\hsize]{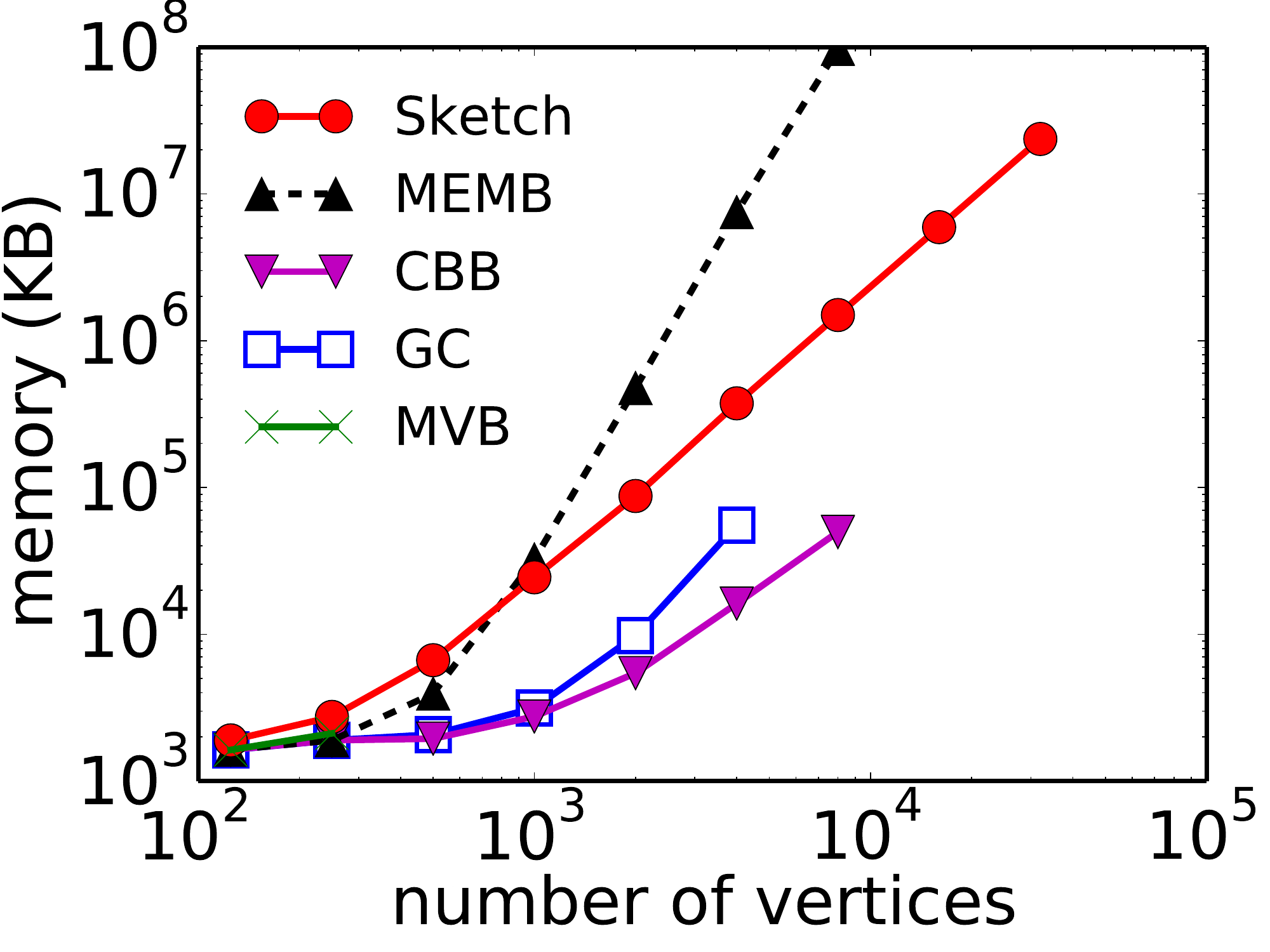}
\includegraphics[width=0.48\hsize]{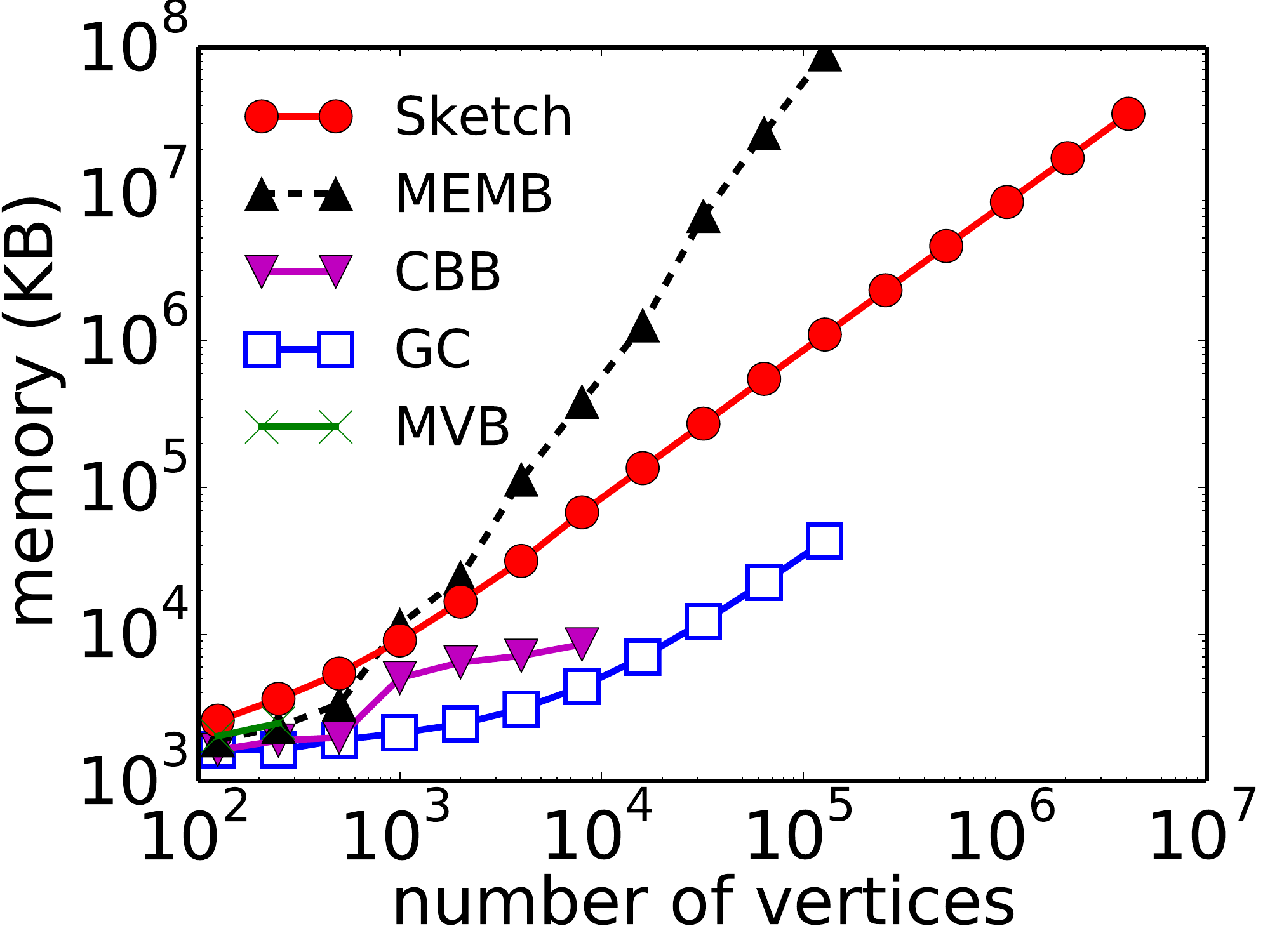}
\vspace{-1em}
\caption{Scalability of computation time (top) and memory usage (bottom) for $(2, 2, g)$-flower (left) and $(2, t)$-BA (right) networks.}
\vspace{-1em}
\label{fig:scalability}
\end{figure}

\myparagraph{Time and Memory Consumption}
The proposed algorithm is scalable for not only for computation time but also for memory usage.
In Figure~\ref{fig:scalability}, computation time (seconds) and memory usage (KB) of the five algorithms were plotted as a function of the number of vertices.
We use $(2,2,g)$-flower $(3 \leq g \leq 11)$ and $(2, t)$-BA $(0 \leq t \leq 15)$ networks as the example of a fractal and a non-fractal network, respectively.
The symbols corresponding to an algorithm were not shown if the algorithm did not stop within 24 hours or could not execute owing to memory shortage.
The performance of the proposed algorithm is comparable to or worse than some other algorithms when the network is relatively small (i.e., $n < 10^4$).
However, the algorithm is orders of magnitude faster than other algorithms for large networks.
Also, it achieves such high a high speed with incomparably smaller memory usage than MEMB, the second fastest algorithm.

\begin{figure}[t!]
\centering
\includegraphics[width=0.48\hsize]{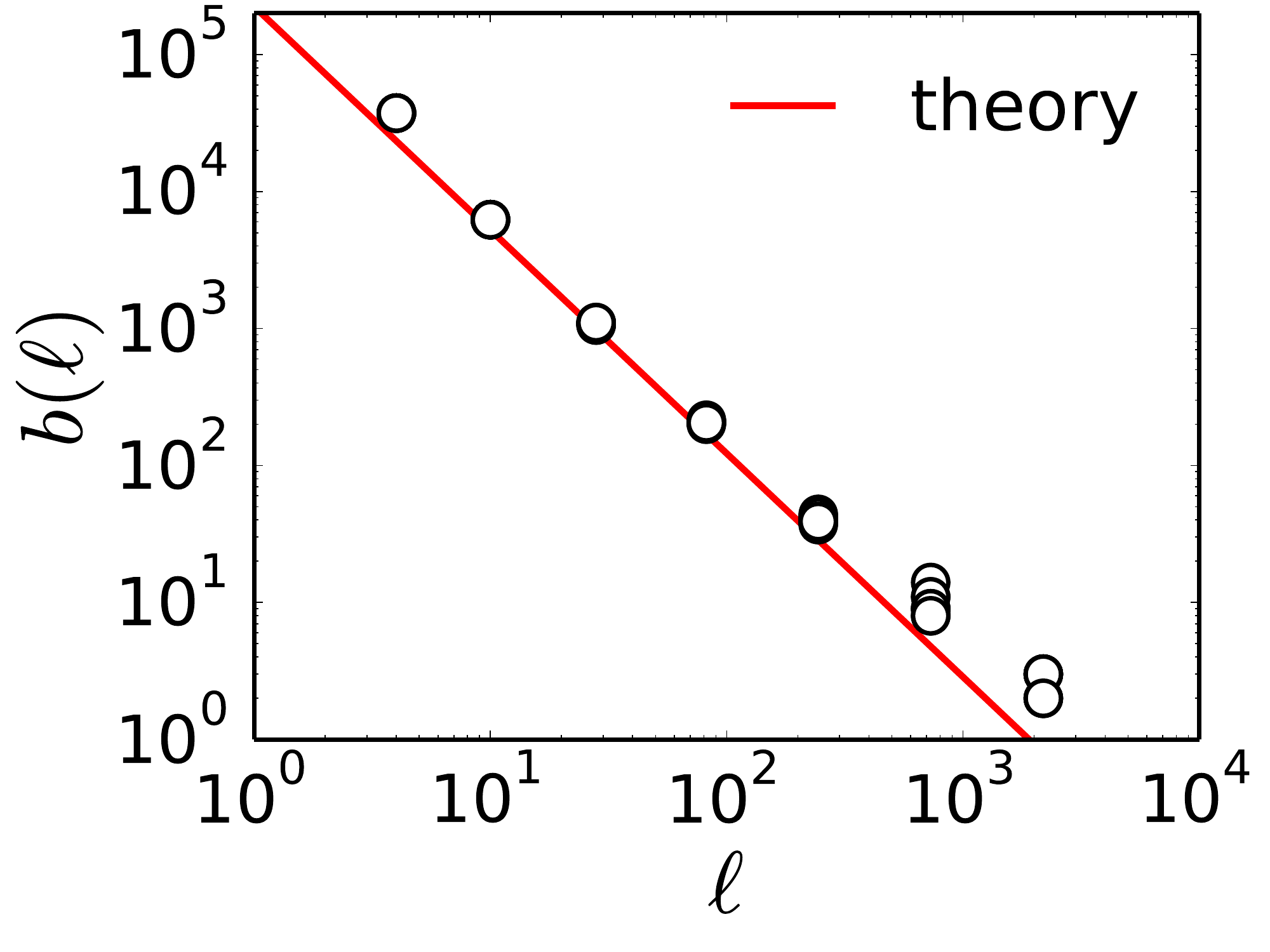}
\includegraphics[width=0.48\hsize]{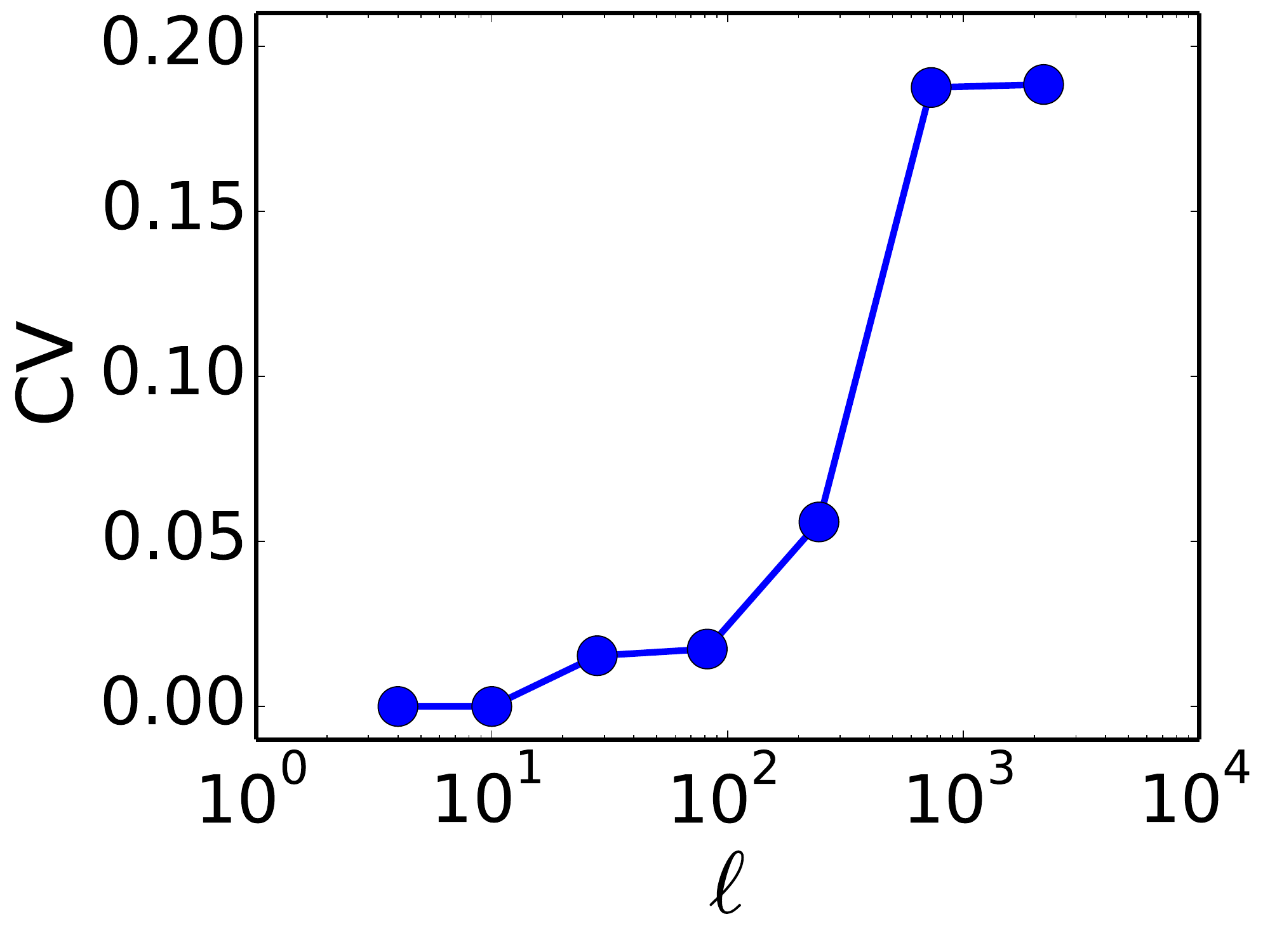}
\vspace{-1em}
\caption{Results of different runs for $(3,3,7)$-flower. (Left) $b(\ell)$ and (Right) CV of them as a function of $\ell$.}
\label{fig:fluctuation}
\vspace{-1em}
\end{figure}

\myparagraph{Robustness over Randomness}
The sketch algorithm accurately recovers $b(\ell)$ of theoretical prediction for fractal network models, and the results are robust over different execution runs.
The left panel of Figure~\ref{fig:fluctuation} shows $b(\ell)$ of ten different runs of the proposed algorithm on $(3,3,7)$-flower.
The $b(\ell)$ values follow well the theoretical solution, which is indicated by the solid line.
As we can clearly observe, the fluctuation in the $b(\ell)$ values due to the randomness is very small.
The consistency over different runs is captured by the CV of $b(\ell)$ (i.e., the ratio of the standard deviation of $b(\ell)$ to its average over ten runs) as a function of $\ell$ (right panel of Figure~\ref{fig:fluctuation}).
The CV values tend to increase with $\ell$. This tendency can be explained by the following two factors.
First, the $b(\ell)$ value takes a positive integer value and monotonically decreases with $\ell$ by definition. Thus, even a change of $\pm 1$ in $b(\ell)$ might cause a large CV value if $\ell$ is large.
Second, our algorithm intrinsically fluctuates more when $\ell$ is larger. This could be because the sizes of the solutions become smaller for larger $\ell$, and hence the algorithm gets a little more sensitive to estimation errors.
Nevertheless, it should be noted that the variance of our algorithm was considerably small even for large $\ell$ (i.e., ${\rm CV} \sim 0.19$ at most). This magnitude of variance would have little impact on the estimation of fractality.

\subsection{Application to Real Large Network}
\label{sec:applications}

In closing this section, we applied the sketch algorithm to a large-scale real graph to show the scalability of the proposed algorithm with an empirical instance.
The results also gave us some insight on the fractality of large-scale real-world networks, which is beyond the reach of previous algorithms.
As a representative instance of a real-world large graph, we considered the
\texttt{in-2004} network~\cite{Boldi2004,Boldi2011}, which is a crawled web graph of $1,382,908$ vertices and $16,917,053$ edges. We discarded the direction of the edges (i.e., hyperlinks) to make the network undirected.
The algorithm took 11.7 hours in total.

The resulting $b(\ell)$ of the sketch algorithm and the fitting curves are shown in Figure~\ref{fig:in-2004}.
We omitted the three points with the smallest $\ell$ values from the fitting because empirical networks would not show a perfect fractality, contrary to well-designed network models.
A large part of the points fall on the line of the fitted power-law function,
and indeed,
our fractality decision procedure yielded $-\log_{10} r_\text{fit} = 0.79$, which suggests the fractality of the \texttt{in-2004} network.
It is worth mentioning that the fractality of this network was unveiled for the first time for the sake of our algorithm.

\begin{figure}
\centering
\includegraphics[width=0.7\hsize]{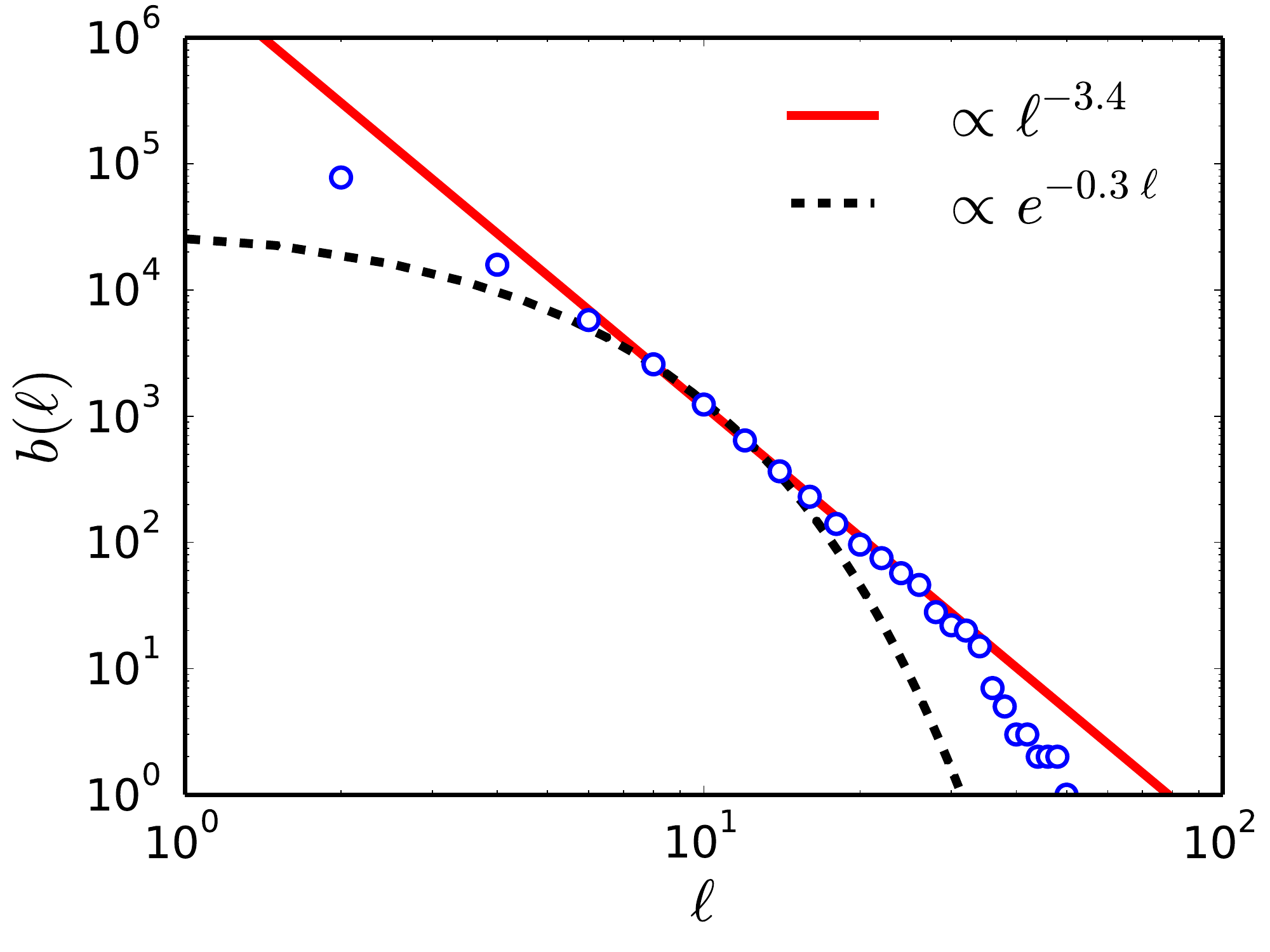}
\vspace{-1.2em}
\caption{Results for a real web graph.}
\vspace{-1em}
\label{fig:in-2004}
\end{figure}

\section{Conclusions}
\label{sec:conclusions}

Fractality is an interesting property that appears in some classes of real networks.
In the present study, we designed a new box-covering algorithm,
which is useful for analyzing the fractality of large-scale networks.
In theory, we have shown desirable guarantees on scalability and solution quality.
In the experiments, we confirmed
that the algorithm's outputs are sufficiently accurate
and that it can handle large networks with millions of vertices and edges.
We hope that our method enables further exploration of graph fractality and its applications
such as graph coarsening.

\myparagraph{Repeatability}
Our implementation of the proposed and previous box-cover algorithms is available at {http://git.io/fractality}.
It also contains the generators of the synthetic network models,
and thus the results in this paper can be perfectly replicated.
 We hope that our public software will enable further exploration of graph fractality and its applications.

\myparagraph{Acknowledgment}
This work was supported by JSPS KAKENHI (No. 15H06828), JST, ERATO, Kawarabayashi Large Graph Project, and JST, PRESTO.
Web graph data was downloaded from \url{http://law.di.unimi.it/datasets.php}.
T.T. thanks to K. Takemoto for valuable discussions.

{
\small
\bibliographystyle{abbrv}
\bibliography{main}
}

\end{document}